\newcommand{\remove}[1]{}
\newcommand{\facilityset}{\mathcal{F}}	
\newcommand{\facset}{\mathcal{F}}				
\newcommand{\clientset}{\mathcal{C}}			
\newcommand{\cliset}{\mathcal{C}}				
\newcommand{\capacity}{\textit{u}}				
\newcommand{\crich}{\mathcal{C}_B}				
\newcommand{\cdense}{\mathcal{C}_B}				
\newcommand{\csparse}{\mathcal{C}_S}
\newcommand{\ballofj}[1]{\mathcal{B}_{#1}} 	
\newcommand{\C}[1]{\hat{C_{#1}}}					
\newcommand{\bundle}[1]{\mathcal{F}_{#1}}			
\newcommand{\neighbor}[1]{\mathcal{F}_{#1}}
\newcommand{\T}[1]{\xi({#1})}
\newcommand{\cen}[1]{B_{#1}}
\newcommand{\dist}[2]{c(#1,~#2)}
\newcommand{\facilitycost}{\textit{$ f_i $}}		
\newcommand{\bard}[1]{{\Delta_{#1}}}			
\newcommand{\floor}[1]{\lfloor{#1}\rfloor}
\newcommand{\sumlimits}[2]{\displaystyle\sum\limits_{#1}^{#2}}												
\newcommand{\sumap}[1]{\sumlimits{#1}{}} 
\newcommand{\load}[1]{d_{#1}}
\newcommand{\singleclient}{\textit{j}}
\newcommand{\singlefacility}{\textit{i}}
\newcommand{\zofi}[1]{w_#1}
\newcommand{\primezofi}[1]{w'_{#1}}
\newcommand{\demandofj}[1]{\textit{$ \Delta_{#1} $}}
\newcommand{\loadjinc}[2]{\phi(#1,~#2)}
\newcommand{\etal}{\textit{et al}.}
\newcommand{\dense}{j_b}
\newcommand{\floordjbyu}[1]{\lfloor{\Delta_{#1}/u}\rfloor}
\newcommand{\sigmaone}{p}
\newcommand{\Sr}[1]{S_{#1}}
\newcommand{\clirem}{\mathcal{C}_r}
\newcommand{\clipen}{\mathcal{C}_p}
\newcommand{\cliout}{\mathcal{C}_o}
\newcommand{\opt}[1]{LP_{opt}}
\newcommand{\lp}[1]{LP_{#1}}
\newcommand{\Bundler}{B_r}
\newcommand{\bundleone}{B_{r}^b}
\newcommand{\bundletwo}{B_{r}^s}
\newcommand{\MC}{bundle }
\newcommand{\MCs}{bundles }
\newcommand{\iofj}{i_{j'}}
\title{Constant Approximation for Capacitated Facility Location Problems with Penalties/Outliers}
\titlerunning{Capacitated Facility Location Problems with Penalties/Outliers}
\author{Rajni Dabas}{Department of Computer Science, University of Delhi, India}{rajni@cs.du.ac.in}{}{[Supported by a UGC-JRF]}
\author{Neelima Gupta}{Department of Computer Science, University of Delhi, India}{ngupta@cs.du.ac.in}{}{}
\authorrunning{Dabas and Gupta}
\begin{document}
\maketitle             

\begin{abstract}

In this paper, we present a framework to design approximation algorithms for capacitated facility location problems with penalties/outliers by rounding a  solution to the standard LP. Primal-dual technique, which has been particularly successful in dealing with outliers and penalties, has not been very successful in dealing with capacities. For example, despite unbounded integrality gap for facility location problem with outliers(FLPO), Charikar \etal~\cite{charikar2001algorithms} were able to get around it by guessing the maximum facility opening cost in the optimal and provide a primal-dual solution for the problem. On the other hand, no primal-dual solution has been able to break the integrality gap of the capacitated facility location problem(CFLP). (Standard)LP-Rounding techniques had also not been very successful in dealing with capacities until a recent work by Grover~\etal~\cite{GroverGKP18}. Their constant factor approximation violating the capacities by a small factor ($1 + \epsilon$) is promising while dealing with capacities. Though LP-rounding has not been very promising while dealing with penalties and outliers, we successfully apply it to deal with them along with capacities. Solutions obtained by using LP-rounding techniques are interesting as they are easy to integrate with other LP-based algorithms. 
 
We apply our framework to obtain first constant factor approximations for capacitated facility location problem with outlier (CFLPO) and capacitated $k$-facility location problem with penalty (C$k$FLPP) for hard uniform capacities. Our solutions incur slight violations in capacities, ($1 + \epsilon$) for the problems without cardinality($k$) constraint and ($2 + \epsilon$) for the problems with the cardinality constraint. For the outlier variant, we also incur a small loss ($1 + \epsilon$) in outliers. Due to the unbounded integrality gaps in the underlying problems, the violations are inevitable while rounding solutions to the standard LP. Thus we achieve the best possible by rounding the solution of natural LP for these problems. To the best of our knowledge, no results are known for CFLPO and C$k$FLPP. The only result known for CFLPP uses local search.

As a byproduct, we obtain first constant factor approximations for ($i$) Capacitated $k$-Median with penalties(C$k$MP) ($ii$) the uncapacitated variants of FLPO, $k$MP and $k$FLPP using LP-rounding.

\keywords{ Facility Location  \and Outliers \and Penalties \and Approximation.}
\end{abstract}

\section{Introduction}
The facility location problem is a fundamental and well studied problem in operational research and theoretical computer science~\cite{Shmoys,ChudakS03,Jain:2001,Byrka07,Chudak98,Li13}. In (uncapacitated) facility location problem (FLP), we are given a set $\cliset$ of $m$ clients and a set $\facilityset$ of $n$ facility locations. Setting up a facility at location $i$ incurs cost $f_i$(called the {\em facility opening cost} or simply the {\em facility cost} ) and servicing a client $j$ by a facility $i$ incurs cost $\dist{i}{j}$ (called the {\em service cost}). We assume that the costs are metric, i.e., they satisfy the triangle inequality. The objective is to select a set $\facilityset' \subseteq \facilityset$, so that the total cost of opening the facilities in $\facilityset'$ and the cost of servicing all the clients by opened facilities is minimized. If the maximum number of clients a facility $i$ can serve is bounded by $u_i$, the problem is called {\em capacitated facility location problem}. 
The above formulation of the problem is sensitive towards noisy clients in the sense that few distantly located clients, called  {\em outliers} can significantly increase the cost of the solution.
To address the concern, Charikar \etal~\cite{charikar2001algorithms} defined the problems of $k-$center, $k-$median and facility location with outliers. 
In {\em facility location problem with outliers} (FLPO), we are given a bound $t$ 
on the maximum number of clients that can be considered as outliers. The objective now is to identify the locations to install facilities and serve at least $m-t$ clients, so that the total cost for setting up the facilities and servicing the selected clients is minimized. In another closely related variant of the problem, called the {\em facility location problem with penalties} (FLPP), instead of a hard bound on the number of outliers, 
we are allowed to leave a client $j \in \cliset$ unserved by paying a penalty cost $p_j$. The objective, then, is to identify the locations to install facilities and select the clients to be served, so that the total cost for setting up the facilities, servicing the selected clients and paying the penalty cost of the unserved clients is minimized. 
 In this paper, we study capacitated facility location problems with penalties/outliers(CFLPP/CFLPO) when the capacities are uniform i.e. $u_i = u$ for all $i \in \facilityset$. We  present a framework to obtain constant factor approximation for these problems. 
We present first constant factor approximation
for CFLPO
violating both the capacities as well as the outliers by a small factor of $(1 + \epsilon)$ each. The result is obtained by rounding a solution to the standard LP; the violations are inevitable as both CFLP~\cite{Shmoys} as well as FLPO ~\cite{charikar2001algorithms} are known to have unbounded integrality gaps from the standard LPs. To the best of our knowledge, CFLPO 
has not been studied earlier. In particular, we present the following result:

\begin{theorem}
\label{thm-CFLPO}
There is a polynomial time algorithm that approximates 
uniform capacitated facility location problem with outliers within a constant factor $(O(1/\epsilon^2))$ violating the capacities by a factor of at most $(1 + \epsilon)$ leaving at most $(1 + \epsilon) t$ outliers, for $\epsilon>0$.
\end{theorem}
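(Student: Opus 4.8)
The plan is to solve CFLPO by reducing it, via LP-rounding, to the capacitated facility location problem without outliers, for which the result of Grover et al.~\cite{GroverGKP18,journal_grover} (a constant-factor approximation violating capacities by $(1+\epsilon)$) can be invoked as a black box. First I would write the natural LP relaxation for CFLPO: variables $\Xstar{i}{j}$ for the fractional assignment of client $j$ to facility $i$, $\zofi{i}$ (or $\hatofy{i}$) for fractionally opening facility $i$, and a variable capturing whether each client is served, subject to the outlier constraint $\sum_{j} (\text{served}_j) \geq m - t$, the capacity constraints $\sumxioverclients{j} \leq u\, \zofi{i}$, and the usual assignment/opening constraints $\x{i}{j}{} \leq \zofi{i}$. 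Solving this LP gives an optimal fractional solution of cost at most $\mathrm{OPT}$.

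The core difficulty, and the step I expect to be the main obstacle, is the coupling between the outlier budget and the capacities: a client may be fractionally served to a small extent in the LP, and naively thresholding which clients to serve can simultaneously overshoot the outlier bound $t$ and create an infeasible assignment for the capacitated instance. The key idea I would use is to partition clients by their fractional service level. Let each client $j$ have total fractional service $s_j = \sumlimits{i \in \facilityset}{} \Xstar{i}{j} \in [0,1]$. I would declare a client to be (tentatively) served if $s_j$ exceeds a threshold like $1 - \delta$ for a suitable $\delta = \Theta(\epsilon)$, and a candidate outlier otherwise. A Markov/counting argument then bounds the number of clients whose fractional service falls below the threshold in terms of the fractional outlier mass, so that the integral set of served clients has size at least $m - (1+\epsilon)t$; this is exactly where the $(1+\epsilon)$ loss in the outlier count is incurred and where the choice of $\delta$ must be calibrated against $\epsilon$.

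Next I would rescale: for the set $S$ of clients designated to be served, each has fractional service at least $1-\delta$, so dividing all the $\Xstar{i}{j}$ restricted to $S$ by $(1-\delta)$ yields a feasible fractional solution to a pure capacitated facility location instance $\UFLPI$ on client set $S$, with capacities inflated by a factor of $1/(1-\delta) = 1 + O(\epsilon)$ and total cost blown up by only $1/(1-\delta)$. Crucially this rescaled fractional solution fully serves every client in $S$, so it is a genuine fractional feasible solution to CFLP on $S$. I would then feed this instance to the Grover et al.\ rounding procedure, obtaining an integral opening and assignment of the clients in $S$ that violates the (already inflated) capacities by a further $(1+\epsilon)$ factor; composing the two violations, and absorbing constants, keeps the overall capacity violation at $(1+\epsilon)$ after reparametrising $\epsilon$.

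Finally I would assemble the cost bound. The facility-opening and service cost of the rounded solution is within a constant factor (the Grover et al.\ factor, inflated by the $1/(1-\delta)$ rescaling, giving the stated $O(1/\epsilon)$) of the fractional optimum, hence of $\mathrm{OPT}$; the outliers not in $S$ number at most $(1+\epsilon)t$ and contribute no service cost. The remaining care is to verify that the clients dropped as outliers are chosen consistently with the LP so that no extra cost is hidden, and to track how the two successive $(1+\epsilon)$ capacity distortions and the threshold parameter $\delta$ combine; I would set all the slack parameters as fixed fractions of the target $\epsilon$ at the outset so that the final bounds read cleanly as $(1+\epsilon)$ capacity violation, $(1+\epsilon)t$ outliers, and $O(1/\epsilon)$ cost, as claimed in Theorem~\ref{thm-CFLPO}.
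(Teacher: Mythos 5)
There is a genuine gap, and it sits exactly at the step you flagged as the core difficulty: the threshold-and-rescale reduction to plain CFLP cannot simultaneously deliver $(1+\epsilon)t$ outliers and a $(1+O(\epsilon))$ capacity inflation. If you keep a client only when its fractional service $s_j=\sum_{i\in\facilityset}x^*_{ij}$ is at least $1-\delta$, then every dropped client has $z^*_j>\delta$, and Markov gives only $|\{j:z^*_j>\delta\}|\le t/\delta$; forcing this to be at most $(1+\epsilon)t$ requires $\delta\ge 1/(1+\epsilon)$, i.e.\ $1-\delta\approx\epsilon$. So the clients you are obliged to keep may be served to extent only $\Theta(\epsilon)$ in the LP, and rescaling their assignments to full service multiplies the load on each facility (and the fractional service cost) by up to $1/(1-\delta)=\Theta(1/\epsilon)$, not $1+O(\epsilon)$. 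The $\Theta(1/\epsilon)$ blow-up in cost would be tolerable for the stated $O(1/\epsilon)$ ratio, but the $\Theta(1/\epsilon)$ blow-up in capacity is fatal to the claimed $(1+\epsilon)$ violation. Your two calibration requirements on $\delta$ (small, for the rescaling; close to $1$, for the outlier count) are mutually exclusive, so the black-box reduction to CFLP collapses. This is precisely why the paper treats CFLPO differently from CFLPP: for penalties one thresholds at $z^*_j\ge 1/\ell$, so the kept clients are already $(1-1/\ell)$-served and the rescaling is benign, but for outliers the threshold must be $z^*_j\ge 1-1/\ell$ and the kept clients can be barely served.

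The paper's fix is to \emph{not} rescale at all. After discarding the clients with $z^*_j\ge 1-1/\ell$ (at most $\frac{\ell}{\ell-1}t$ of them), it carries the partial service extents through a modified version of the Grover et al.\ machinery: the clustering is run with each surviving client contributing demand equal to the extent to which it is served in $\hat\rho$ (using that $\sum_i\hat x_{ij}>1/\ell$ only to guarantee enough fractional opening, at least $\frac{1}{2\ell}$, inside each ball), facilities are opened integrally per sparse/dense cluster, and the final integral assignment is obtained by solving a transportation problem \emph{with outliers} (budget $\hat t=\lceil\sum_{j\in\clirem}\hat z_j\rceil$), whose LP is totally unimodular. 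That last step is what lets the remaining clients stay partially unserved without any capacity rescaling, and it is the ingredient your proposal is missing. To repair your argument you would have to either open up the CFLP rounding as the paper does, or find a selection rule for outliers that leaves every kept client almost fully served while dropping only $(1+\epsilon)t$ clients --- which the natural LP does not provide.
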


We next present our result for uniform {\em capacitated facility location problem with penalties} (CFLPP). 
Though there has been significant amount of work on uncapacitated FLPP~\cite{charikar2001algorithms,jain2003greedy,xu2005lp,xu2009improved,LiDXX15,CFLPP_Byrka} with the current best being $1.8526$ due to Xu and Xu~\cite{xu2009improved}, the only result known for the capacitated case is due to Gupta \etal~\cite{GuptaG14} using local search. In this paper, we present first constant factor approximation for the problem by rounding the solution to standard LP with slight violation in capacities, which is inevitable due to integrality gap of CFLP~\cite{Shmoys} using standard LP. Though our result is weaker than the result
of Gupta \etal~
it is interesting as it is obtained by rounding solution to standard LP. LP-rounding based solutions are interesting as they are easier to integrate with other LP-based solutions. One direct application is presented in this paper by extending the result to a more generalized problem with an additional cardinality constraint. The result is stated in Theorem \ref{thm-ckflpp1}. It is not clear how the result of Gupta \etal,~can be extended to give the result of Theorem \ref{thm-ckflpp1}, as local search has not been very successful in dealing with the capacity and the cardinality constraints together. In particular, we present the following result for CFLPP:



\begin{theorem}
\label{thm-cflpp}
There is a polynomial time algorithm that approximates 
uniform capacitated facility location problem with penalties within a constant factor $(O(1/\epsilon))$ violating the capacities by a factor of at most $(1 + \epsilon)$, for $\epsilon>0$.
\end{theorem}



We next consider a more general problem of {\em Capacitated k-facility location problem with penalties} (C$k$FLPP) where-in we are given an additional bound $k$ on the maximum number of facilities that can be opened with the cost function same as that of CFLPP. We present first constant factor approximations for the problem with and without violating the cardinality, violating the capacities a little. Our results for C$k$FLPP also provide same results for the well known {\em capacitated $k$-median problem with penalties} (C$k$MP) as a special case. 
%
%
%
Though, some results are known for {\em uncapacitated $k$-median problem with penalties} ($k$MP)~\cite{charikar2001algorithms,Hajiaghayi-kMP,Wu-kMP} and {\em uncapacitated $k$-facility location problem with linear penalties}($k$FLPP)~\cite{wang-kFLPP} using primal dual schema and local search, no result has been obtained using LP-rounding to the best of our knowledge. Thus, our results for C$k$FLPP provide first LP-rounding based approximations for the uncapacitated variants of these problems as well. To the best of our knowledge, no results are known for the capacitated variants of the problems.


\begin{theorem}
\label{thm-ckflpp1}
There is a polynomial time algorithm that approximates uniform capacitated $k$-facility location problem with penalties within a constant factor $(O(1/\epsilon))$ violating the capacities by a factor of at most $(1 + \epsilon)$ and cardinality by a factor of at most $2$, for $\epsilon>0$.
\end{theorem}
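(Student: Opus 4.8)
The plan is to obtain C$k$FLPP by augmenting the CFLPP algorithm of Theorem~\ref{thm-cflpp} with the single extra constraint $\sum_{i \in \facilityset} y_i \leq \kmed$ in the natural LP. Writing $x_{ij}$ for the fractional assignment of client $j$ to facility $i$, $y_i$ for the opening of facility $i$, and $z_j$ for the fractional penalty paid on client $j$, the relaxation minimises $\sum_i f_i y_i + \sum_{i,j} \dist{i}{j} x_{ij} + \sum_j p_j z_j$ subject to $\sum_i x_{ij} + z_j \geq 1$ for every $j$, $x_{ij} \leq y_i$, $\sum_j x_{ij} \leq \capacity\, y_i$, and the new cardinality bound $\sum_i y_i \leq \kmed$. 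First I would solve this LP; since it differs from the CFLPP relaxation only in one valid inequality, its optimum is a lower bound on the integral C$k$FLPP optimum.

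Next I would run the CFLPP rounding of Theorem~\ref{thm-cflpp} essentially verbatim on this solution. That procedure first fixes the penalised clients by thresholding the $z_j$ (a client whose $z_j$ exceeds a constant is left unserved, charging $p_j$ against its LP penalty) and then, on the served clients, builds the dense/sparse partition and the \MCs used there. The cost, the $(1+\epsilon)$ capacity violation, and the penalty guarantee are all inherited unchanged, since none of those arguments relies on the absence of a cardinality bound.

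The only genuinely new thing to prove is that this rounding opens at most $2\kmed$ facilities, and this is where the real work lies. The \MC construction groups the fractionally open facilities into disjoint bundles $\bundle{b}$ and opens at most one facility per bundle, so it suffices to show that every bundle carries fractional opening mass $\sum_{i \in \bundle{b}} y_i \geq 1/2$. Granting this, the number of bundles, hence of opened facilities, is at most $2\sum_{i \in \facilityset} y_i \leq 2\kmed$, which is exactly the factor-$2$ cardinality violation. I would therefore revisit the bundle-formation step and verify the $\tfrac12$-mass invariant directly from its definition, checking in particular that the separate handling of the sparse clients and of the clients that end up penalised never produces a bundle of opening mass below $\tfrac12$.

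The main obstacle is precisely the interaction between the \emph{global} cardinality bound and the \emph{local} capacity constraints. In plain CFLPP one may open as many facilities as the capacities require, so the bundles can be made to carry almost a full unit of opening each; under cardinality we cannot afford that and must keep the bundle count close to $\sum_i y_i$. Holding the per-bundle opening mass at $\geq \tfrac12$ while still absorbing every served client within $(1+\epsilon)\capacity$ is the delicate point, since opening one facility per bundle concentrates the load that the LP had spread over the whole bundle; one must argue, via the transportation/flow assignment already used in Theorem~\ref{thm-cflpp}, that this concentrated load still fits inside $(1+\epsilon)\capacity$. Finally, setting every $f_i = 0$ turns C$k$FLPP into C$k$MP, so the same bound yields the capacitated $k$-median with penalties result as an immediate corollary.
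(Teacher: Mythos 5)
Your overall route is the same as the paper's: solve the natural LP with the extra constraint $\sum_{i}y_i\le k$, run Step~1 of Section~\ref{CFLPP} to fix the penalised clients and raise the assignments/openings of the rest, and then hand the resulting fractional solution to the rounding for the cardinality-constrained underlying problem. The paper in fact disposes of the theorem in one line by invoking the C$k$FLP algorithm of Grover et al.\ as a black box, with $O(1/\epsilon)$ cost, $(1+\epsilon)$ capacity violation and --- this is the crux --- a cardinality loss of only $2/(1+\epsilon)$ relative to the fractional opening mass it is given.

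The place where your version has a genuine gap is the cardinality accounting. Step~1 scales every opening up to $\hat{y}_i\le\frac{\ell}{\ell-1}y^*_i$, so the solution you actually round has $\sum_i\hat{y}_i\le\frac{\ell}{\ell-1}k=(1+\epsilon)k$, not $k$. Your invariant ``each bundle carries mass at least $1/2$, hence at most $2\sum_i y_i\le 2k$ facilities'' silently applies the bound $\sum_i y_i\le k$ to the \emph{scaled} solution and therefore only delivers $2(1+\epsilon)k$ open facilities; to land on exactly $2k$ you need the sharper statement that the rounding opens at most $\frac{2}{1+\epsilon}\sum_i\hat{y}_i$ facilities, which is precisely the ``careful analysis'' the paper's footnote refers to. Two smaller inaccuracies: the picture ``one facility per bundle'' is false for dense clusters, where $\Theta(d_{j'}/u)$ facilities are opened (the argument survives there because the fractional mass of $\bundle{j'}$ is already at least $d_{j'}/u$, so the opened-to-fractional ratio stays bounded by $2$); and the half-mass property for sparse clusters is not something to ``verify from the bundle definition'' in general --- it holds here only because Step~1 makes every surviving client fully served, giving $\sum_{i\in\ballofj{j'}}\hat{y}_i\ge 1-\frac{1}{\ell}\ge\frac12$, whereas in the outlier variant of Section~\ref{CFLPO} the guaranteed mass is only $\frac{1}{2\ell}$ and your argument would collapse.
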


\begin{theorem}
\label{thm-ckflpp2}
There is a polynomial time algorithm that approximates 
uniform capacitated $k$-facility location problem with penalties within a constant factor $(O(1/\epsilon^2))$ violating the capacities by a factor of at most $(2 + \epsilon)$, for $\epsilon>0$.
\end{theorem}

The integrality gaps of capacitated $k$-median(C$k$M)~\cite{Charikar:1999} apply to C$k$FLP as well, i.e., no constant factor approximation can be obtained by rounding a solution to standard LP, violating one of the cardinality/capacity by less than a factor of $2$ without violating the other. Thus the  violations in Theorems~\ref{thm-ckflpp1} and~\ref{thm-ckflpp2} are inevitable.

\textbf{Related Work:} Charikar \etal~\cite{charikar2001algorithms} were the first to define the problems of $k$-center, facility location and $k$-median with penalties/outliers and they showed that the 
Uncapacitated
FLPO 
has unbounded integrality gap with standard LP.
Theye get around the gap by guessing the most expensive facility opened by the optimal to give a $(3+\epsilon)$-factor approximation using primal dual technique. 
Friggstad \etal~\cite{FriggstadKR19}
gave a PTAS using multiswap local search on a restricted variant of the problem with uniform facility opening costs and doubling metrics.

For uncapacitated FLPP, a $3$-factor approximation using primal dual technique was given by Charikar \etal~\cite{charikar2001algorithms} which was subsequently improved to $2$ independently by Jain \etal~\cite{jain2003greedy} and Wang \etal~\cite{Wang2015penalties} using dual-fitting and a combination of primal-dual and greedy approach respectively. Later, Xu and Xu~\cite{xu2005lp} gave a $(2+2/e)$ approximation using LP rounding. Same authors~\cite{xu2009improved} improved the factor to $1.8526$
using a combination of primal-dual schema and local search. For linear penalties, using LP-rounding, Li \etal~\cite{LiDXX15} gave a $1.5148$-factor which was subsequently improved to $1.488$ by Byrka and Lewandowski~\cite{CFLPP_Byrka}.


Uncapacitated $k$-median problem with outlier($k$MO) is also known to have an unbounded integrality gap with standard LP~\cite{charikar2001algorithms}. Charikar \etal~\cite{charikar2001algorithms} gave a $4(1+1/\epsilon)$-approximate solution using primal-dual technique with $(1+\epsilon)$-factor violation in outliers. 
Friggstad~\etal~\cite{FriggstadKR19} used local search techniques to obtain $(3+\epsilon)$ and $(1+\epsilon)$-approximations with ($1+\epsilon$) violation in cardinality for general and doubling metric respectively.
The first true constant factor approximation  was given by Chen~\cite{chen-kMO}  using a combination of primal-dual and local search. 
The current best known result for the problem is due to 
Krishnaswamy \etal~\cite{krishnaswamy-kMO} who gave a $(7.081+\epsilon)$-factor approximation using iterative rounding and strengthened LP.

For $k$MP, Charikar \etal~\cite{charikar2001algorithms} gave a $4$-factor approximation using primal-dual schema. This was later improved to $(3+\epsilon)$ by Hajiaghayi \etal~\cite{Hajiaghayi-kMP} using local search technique. For uniform penalties, Wu \etal~\cite{Wu-kMP} gave a $(1+\sqrt{3}+\epsilon)$-approximation algorithm via pseudo-approximation.
The only result known for $k$FLPP is due to 
Wang \etal~\cite{wang-kFLPP} who used local search to obtain $(2+\sqrt{3}+\epsilon)$-approximation.

For CFLP, Shmoys \etal~\cite{Shmoys} gave the first constant factor($7$) approximation when the capacities are uniform, with a capacity blow-up of $7/2$, by rounding the solution to standard LP. Grover \etal \cite{GroverGKP18} 
reduced the capacity violation to $(1 + \epsilon)$, thereby showing that the capacity violation can be reduced to arbitrarily close to $0$ by roudning a solution to standard LP. An \etal~\cite{Anfocs2014} gave the first true constant factor(288) approximation for the problem (non-uniform), by strengthening the standard LP. 
Several results~\cite{KPR,ChudakW99,mathp,paltree,mahdian_universal,zhangchenye,Bansal} have been successfully obtained using local search with the current best being $5$-factor for non-uniform~\cite{Bansal} and $3$-factor for uniform capacities~\cite{Aggarwal}.

For C$k$M, constant factor approximations~\cite{Charikar,Charikar:1999,capkmshanfeili2014,ChuzhoyR05,capkmByrkaFRS2013,aardal2013,GroverGKP18} are known, that violate capacities or cardinality by a factor of $2$ or more. Violations were reduced to $(1 + \epsilon)$ by using strengthened LP~\cite{ByrkaRybicki2015,capkmshili2014,Lisoda2016,Demirci2016}. 
For uniform C$k$FLP, a constant factor approximation, with $(2 + \epsilon)$ violation in capacities, was given by Byrka \etal~\cite{capkmByrkaFRS2013} using dependent rounding. This was followed by two constant factor approximations by Grover \etal~\cite{GroverGKP18}, one with $(1 + \epsilon)$ violation in capacities and $2$-factor violation in cardinality and the other with $(2+\epsilon)$ factor violation in capacities only.


The only result known for the outlier/penalty variant of the problems in the capacitated setting
is  by Gupta \etal~\cite{GuptaG14}. They gave $(5.83+\epsilon)$ and $(8.532+\epsilon)$ factor approximation for CFLPP for uniform  and non-uniform capacities respectively using local search. A $25$-factor approximation for Capacitated $k$-center with outliers is given by Cygan and Kociumaka~\cite{Cygan-kCO}. To the best of our knowledge, no result is known for CFLPO and C$k$FLPP. 

\textbf{Our Techniques: }
Most of the work dealing with outliers/penalty uses primal-dual technique or a combination of the primal-dual/dual-fitting with greedy/local search schema. Since primal-dual technique has not been able to handle capacities, these works could not be extended to the capacitated variant of the problems. Local search alone does not perform well for outliers even for the uncapacitated variants. We use LP-rounding to obtain our desired claims.

The proposed framework works in two/three steps.  In the first step we identify the set of clients that serve as outliers (/pay penalty)  in our solution. 
 In the second step, we call upon the solution to the underlying problem without outliers(/penalty). In some cases, we are able to directly plug-in the solution of the underlying problem eliminating the need for the third step. However, in some other cases, we modify the solution of the underlying problem to open the facilities integrally. Once we identify the set of facilities to open, in third step, we can solve the transportation problem with outliers (/penalty) to obtain the integral assignments. Thus, in the rest of the paper, we only focus on identifying the set of facilities to open. Note that the LP for the transportation problem with outliers (/penalty) is TUM(totally unimodular) and hence provides an integral optimal solution.
 
 In step $2$ of CFLPP (and C$k$FLPP with cardinality violation), we raise the assignment of the remaining clients to $1$ and the openings accordingly so that the solution so obtained is a feasible solution for the LPs of CFLP (/C$k$FLP). In this case, we are able to directly plug-in the solution of CFLP (/C$k$FLP) eliminating the need for the third step. 
For CFLPO (and C$k$FLPP without cardinality violation), in our reduced problem after step $1$, we cannot simply plug-in the solutions of CFLP (/C$k$FLP) as raising the assignments of the remaining clients to $1$ leads to large violation in outliers (/violation in cardinality). Hence, in step $2$, we modify the solution of the underlying problem to open the facilities integrally preserving the extent to which clients are serviced from step $1$. 

\textbf{Organisation of the paper:}
In Section \ref{CFLPO}, we present our algorithm for CFLPO followed by
CFLPP 
in Section \ref{CFLPP}. Results for C$k$FLPP are presented in Section \ref{CkFLPP-(2+e)}. We conclude with some additional results based on the use of strengthened LP for the underlying problems in Section \ref{add_results}
and future scope in Section \ref{conclusion}.
 \section{Capacitated Facility Location Problem with Outliers}
 \label{CFLPO}


CFLPO can be formulated as the following integer program (IP):

\label{{unif-CFLP}}
$\text{Min} ~\mathcal{C}ostCFLPO(x,y,z) = \sum_{j \in \cliset}\sum_{i \in \facilityset}\dist{i}{j}x_{ij} + \sum_{i \in \facilityset}f_iy_i $
\begin{eqnarray}
\text{subject to} &\sum_{i \in \facilityset}{} x_{ij} + z_j \geq 1 & \forall ~\singleclient \in \clientset \label{LPFLP_const1}\\ 
&\sum_{j \in \cliset}{} x_{ij} \leq \capacity ~ y_i & \forall~ \singlefacility \in \facilityset \label{LPFLP_const2}\\ 
& x_{ij} \leq y_i & \forall~ \singlefacility \in \facilityset , ~\singleclient \in \clientset \label{LPFLP_const3}\\   
& \sum_{j \in \clientset}z_j \leq t &  \label{LPFLP_const4}\\   
& z_j,y_i,x_{ij} \in \left\lbrace 0,1 \right\rbrace  \label{LPFLP_const5}
\end{eqnarray}
where variable $y_i$ denotes whether facility $i$ is open or not, $z_j$ indicates whether client $j$ is an outlier and, $x_{ij}$ indicates if client $j$ is served by facility $i$ or not. 
Constraints \ref{LPFLP_const1} and \ref{LPFLP_const3} ensure that clients are either outliers or are assigned to opened facilities. Constraints \ref{LPFLP_const2} and Constraint \ref{LPFLP_const4} take care of the bounds on capacities and outliers resp.
LP-Relaxation of the problem is obtained by allowing the variables $ z_j,y_i, x_{ij} \in [0, 1]$. Call it $\lp{CFLPO}$. 

\textbf{Step $1$: Identifying the outliers: }We first identify the set of clients that will be treated as outliers in our solution. 
Let $\rho^{*} = <x^*, y^*, z^*>$ denote the optimal solution of $\lp{CFLPO}$ and $\opt{}$ denote the cost of $\rho^*$. 
For a given $0 < \epsilon < 1/2$, let $j \in \clientset$ be such that $z^*_j \geq (1-\epsilon)$, we treat such clients as outliers in our solution.
Let $\cliout$ be the set of these clients and $\clirem$ be the set of remaining clients.
Let $\hat{\rho} = <\hat{x}, \hat{y}, \hat{z}>$ be the solution so obtained. 
Facility openings and  the assignments of the remaining clients remain the same. Note that $\sum_{j \in \cliset}\hat{z} _j  \leq (\frac{1}{1-\epsilon})\sum_{j \in \cliset}z^*_j \leq (1+2\epsilon)t$ for $\epsilon \leq 1/2$. Also, $CostCFLPO(\hat{x},\hat{y},\hat{z}) \leq CostCFLPO(x^*,y^*,z^*)$.
    

\begin{figure}[]
	\begin{tabular}{ccc}
		\includegraphics[width=35mm,scale=0.5]{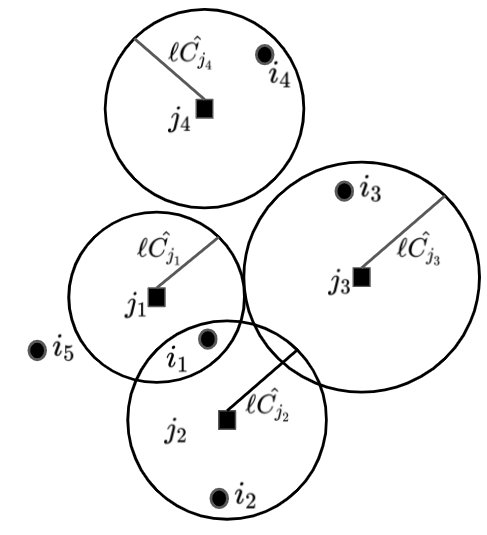}
		& 
	     \hspace{0.1cm} \includegraphics[width=35mm,scale=0.5]{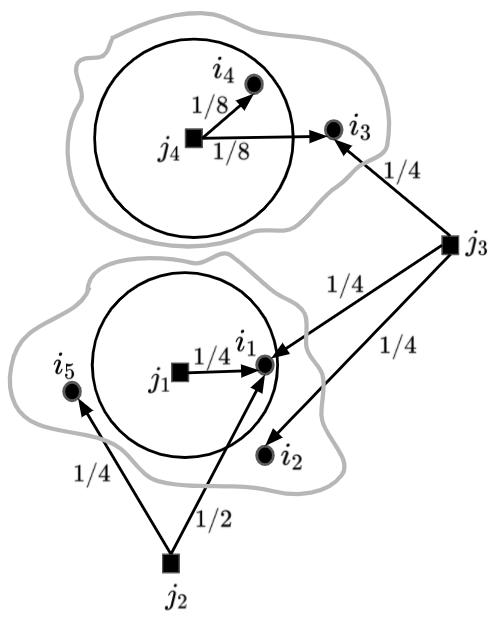}
		&
		\includegraphics[width=35mm,scale=0.5]{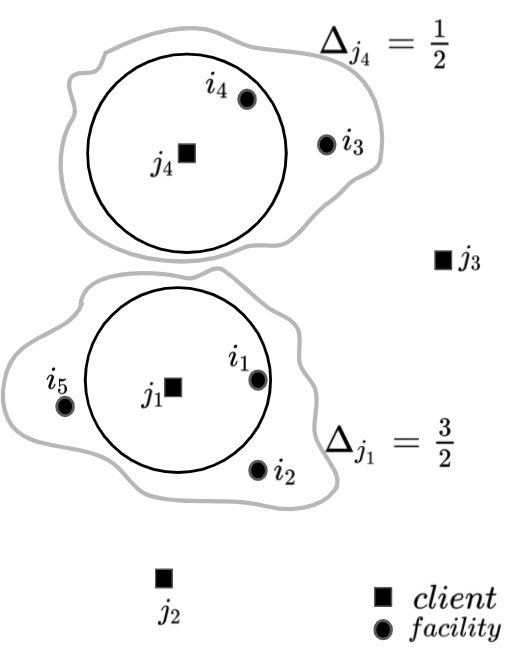}
		\\ 
		(a) & \ \  (b) & \ \ (c)
		\\
		\end{tabular}
		
		\caption{(a) Balls around the clients in $\clirem$.
		(b) Let $\ell=4$. Reduced set of clients $\clientset' = \{j_1,j_4\}$, partition of $\facilityset: \bundle{j_1} = \{i_1,i_2,i_5\},~\bundle{j_4} = \{i_3,i_4\}$ and assignment by LP solution.
		(c) Partitioning of demand: $\demandofj{j_1} = \sum_{j \in \clirem}(\hat{x}_{i_1j}+\hat{x}_{i_2j} + \hat{x}_{i_5j})$, $\demandofj{j_4} = \sum_{j \in \clirem}(\hat{x}_{i_3j}+\hat{x}_{i_4j})$.
	}\label{fig-clustering}\end{figure}

\textbf{Step $2$: Integrally Open Solution: }
\label{clustering-cflpo}
A solution $<x,y,z>$ is said to be an {\em integrally open solution} if each facility is either fully opened or fully closed, that is, $\forall i \in \facilityset$, $y_i$ is $0$ or $1$. We obtain an integrally open solution for the problem instance with the reduced  set $\clirem$ of clients. Recall that the clients in our reduced instance 
 need not be
fully served;
however, $ \sum_{i \in \facilityset} \hat{x}_{ij} > \epsilon,\forall j \in \clirem$
.We preserve the extent to which the clients in $\clirem$ are served while opening the facilities integrally. 

Let $\ell \geq 2$ be a tuneable parameter. We first sparsify the problem instance by removing some clients from the client set $\clirem$. 
This is done using standard clustering technique (refer to Figure \ref{fig-clustering}): for $j \in \clirem$, let $ \C{j} $ denote the average connection cost of $\singleclient$ in $\hat{\rho}$, i.e., ~$\C{j} = (\sum_{i \in \facilityset}{} \hat{x}_{ij} \dist{i}{j})/({\sum_{i \in \facilityset}{} \hat{x}_{ij}})$. Further, let $\ballofj{j}$ be the set of facilities within a distance $ \ell\C{j}$ of $j,~ i.e., \ \ballofj{j}  = \{ \singlefacility \in \facilityset \colon \dist{i}{j} \leq \ell \C{j} \} $. Then, total extent up to which facilities in $\ballofj{j}$ are opened under $\hat{\rho}$ is $\ge (1-1/\ell)(\sum_{i \in \facilityset} \hat{x}_{ij}) \geq \epsilon/2$. 
 Clients in $\clirem$ are considered in the non-decreasing order of 
$\ell \C{j}$. For a client $j$ at hand, remove all the clients $k : \dist{j}{k} \le 2 \ell \max \{\C{j}, \C{k}\}$ 
and repeat the process with the remaining clients. Let $\clientset'$ be the set of  clients remaining after all the clients in $\clirem$ have been considered. Clusters, of facilities, are formed around the  clients in $ \clientset'$ by assigning a facility to the cluster of the nearest client in $\clientset'$, i.e. if, for $j' \in \clientset'$, $\bundle{j'}$ denotes the cluster centered at $j'$ then a facility $i$ belongs to $\bundle{j'}$ if and only if $j'$ is the closest client in $\clientset'$ to $i$. Note that $\ballofj{j'} \subseteq \bundle{j'}$.
The clients in $\clientset'$ are called the cluster centers.
Any two cluster centers $j', k'$ in $\clientset'$ satisfy: $\dist{j'}{k'} > 2\ell~max\{ \C{j'}, \C{k'}\}$. 
For $j\in \clirem$, $j' \in \clientset'$, let $\loadjinc{j}{j'}$ be the extent up to which $j$ is served by the facilities in $\bundle{j'}$. Let $\Delta_{j'}= \sum_{j \in \clirem} \loadjinc{j}{j'}$, $\forall j' \in \cliset'$. A cluster is said to be $small$ if $\Delta_{j'} \le  u$ otherwise it is called $big$. Let $\csparse = \{j' \in \clientset': \Delta_{j'} \le  u\}$ and $\cdense = \clientset' \setminus \csparse$.

For $j' \in \csparse$, we open a cheapest facility say $\iofj$, in $\ballofj{j'}$ at a loss of factor ($2/\epsilon$) in the facility opening cost and transfer all the assignments coming into the cluster onto it at a loss of $4(\ell+1)$ factor in the service cost.\footnote{For a client $j$, there must be a cluster center $k'$ such that $\dist{j}{k'} \le 2 \ell \C{j} $. Then 
$\dist{\iofj}{j} \le \dist{j'}{j} + \dist{\iofj}{j'} \le  \dist{j'}{j} + \ell \C{j'} \le 2\dist{j'}{j}$ (for a far off $j$) $ \le 2(\dist{j'}{i} + \dist{i}{j})$ (for $i\in \bundle{j'}$) $ \le 2(\dist{k'}{i} + \dist{i}{j})$ (since $i$ is nearest to $j'$ in $\clientset'$) $\le 2(\dist{k'}{j} + \dist{j}{i} + \dist{i}{j}) \le 4 (\ell \C{j} + \dist{i}{j}$).}
 Since $\Delta_{j'} \le  u$ there is no  violation in the capacity. To handle big clusters, for every $j' \in \cdense$, we solve the LP: Min $Cost_{CI}(w) = \sum_{\singlefacility \in \bundle{j'}} ( f_i +\capacity \dist{i}{j'} ) \zofi{i}$ s.t. $\capacity \sum_{\singlefacility \in \bundle{j'}} \zofi{i} \geq \demandofj{j'}$ and $\zofi{i} \in [0,1]$. It can be easily shown that $\zofi{i} = \sum_{ j \in \clirem}\hat{x}_{i j}/\capacity$ is a feasible solution with cost at most $ \sum_{i \in \bundle{j'}} [\facilitycost \hat{y}_i +  \sum_{ j \in \clirem} \hat{x}_{i j} (\dist{i}{ j} + 2\ell \C{j})$].
	An almost integral\footnote{a solution is called an {\em almost integral}  if it has at most one fractionally opened facility.} solution $\primezofi{{}}$ is obtained by arranging the fractionally opened facilities in $\zofi{{}}$ in non-decreasing order of $\facilitycost  + \dist{i}{j'} \capacity $ and greedily transferring the openings $\zofi{{}}$ without increasing the cost of the solution.
	We obtain an integrally open solution $\hat{w}$ as follows: if opening of the fractionally opened facility, if any, is $\leq \epsilon$, we close it 
	else we open it at  $(1/\epsilon)$-factor loss in (facility) cost. 
	$ \Delta_{j'}$ is distributed equally to the facilities opened in the cluster incurring a loss of at most $(1+\epsilon)$-factor in capacities and (service) cost.

 By choosing $\ell=2$, and summing over all small and big clusters, we obtain an integrally open solution that violates the capacities by a factor of $(1+\epsilon)$ and is of cost bounded by $O(1/\epsilon)$. 
\section{Capacitated Facility Location Problem with Penalties}
\label{CFLPP}


In this section, we will reduce CFLPP to an instance of CFLP and use existing results of CFLP~\cite{GroverGKP18,Anfocs2014} as a black box to get the desired result.
CFLPP can be formulated as the following integer program (IP):

\label{{unif-CFLP}}
$\text{Min}~\mathcal{C}ostCFLPP(x,y,z) = \sum_{j \in \cliset}\sum_{i \in \facilityset}\dist{i}{j}x_{ij}   + \sum_{i \in \facilityset}f_iy_i + \sum_{j \in \cliset} p_j z_j$ 
\begin{eqnarray} 
\text{subject to} &\sum_{i \in \facilityset}{} x_{ij} + z_j \geq 1 & \forall ~\singleclient \in \clientset \label{LPCFLPP_const1}\\ 
&\sum_{j \in \cliset}{} x_{ij} \leq \capacity ~ y_i & \forall~ \singlefacility \in \facilityset \label{LPCFLPP_const2}\\ 
& x_{ij} \leq y_i & \forall~ \singlefacility \in \facilityset , ~\singleclient \in \clientset \label{LPCFLPP_const3}\\   
& z_j,y_i,x_{ij} \in \left\lbrace 0,1 \right\rbrace  \label{LPCFLPP_const4}
\end{eqnarray}
where variable $z_j$ denotes if client $j$ pays penalty or not. Constraints \ref{LPCFLPP_const1} ensure that every client is either served or pays penalty. 
LP-Relaxation of the problem is obtained by allowing the variables $ z_j,y_i, x_{ij} \in [0, 1]$.  Call it $\lp{CFLPP}$. 


\textbf{Step $1$: Identifying the clients that pay the penalty: }We first identify the clients that will penalty in our solution; rest of the clients are serviced fully. Let $\rho^{*} = <x^*, y^*, z^*>$ denote the optimal $LP$ solution of $\lp{CFLPP}$ and $\opt{CFLPP}$ denote the cost of $\rho^*$. 
For a given $0 < \epsilon < 1/2$, let $j \in \clientset$ be such that $ z^*_j \geq \epsilon$, we pay penalty for such clients in our solution incurring at most $(1/\epsilon)$-factor loss in penalty costs. Let $\clipen$ be the set of these clients and $\clirem$ be the set of rest of the clients. For $j \in \clirem,~\forall i \in \facilityset : x^*_{ij}> 0 $, we raise the assignment of $j$ on $i$ proportionately so that $j$ is fully serviced. Opening of $i$ is raised so as to satisfy constraints~(\ref{LPCFLPP_const3}). Let $\hat{\rho} = <\hat{x}, \hat{y}, \hat{z}>$ be the solution obtained. 
That is, 
$\forall j \in \clirem$ and $\forall i \in \facilityset$, set $\hat{x}_{ij} = x^*_{ij}/\sum_{i \in \facilityset}x^*_{ij}$ and $\hat{y}_i  = min \{1, y^*_i \cdot max_{j \in \clirem : x^*_{ij} > 0 } \{\hat{x}_{ij}/x^*_{ij} \}\}$. 
   Note that $ x^*_{ij} \le $\footnote{Wlog we assume that ${\sum_{i \in \facilityset}x^*_{ij}} \le 1$.} $\hat{x}_{ij} \le (\frac{1}{1-\epsilon}) x^*_{ij}$
and $y^*_i \le \hat{y}_i \le (\frac{1}{1-\epsilon}) y^*_i.$ It is easy to see that $\hat{\rho}$ is a feasible solution to $\lp{CFLP}$ except for violating the capacities by a factor of $(\frac{1}{1-\epsilon}) \leq (1+2\epsilon)$ for $\epsilon \leq 1/2$. Also, cost of $\hat{\rho}$ is bounded by $(1/\epsilon)\opt{CFLPP}$.

\textbf{Step $2$: Reducing to CFLP: } Next, we solve an instance of capacitated facility location where-in the client set is reduced to $\clirem$ and capacities are scaled up by a factor of  ($1+2\epsilon$).
$<\hat{x},\hat{y}>$ provides a feasible solution for the LP of CFLP with cost at most
 $(\frac{1}{1-\epsilon})(\sum_{i \in \facilityset}f_iy^*_i + \sum_{j \in \clirem}\sum_{i \in \facilityset}\dist{i}{j}x^*_{ij} )$. Let $<\Bar{x},\Bar{y}>$ be an $\alpha$-approximate solution for CFLP with $\gamma$-factor violation in capacities. Then, $<\Bar{x},\Bar{y}, \hat{z}>$ is a solution to CFLPP
of cost within a constant factor $(max \{ \alpha(\frac{1}{1-\epsilon}) , \frac{1}{\epsilon} \})$ of $\opt{CFLPP}$ violating the capacities by a factor of $\gamma(1 +2\epsilon)$. Using the result of Grover \etal~\cite{GroverGKP18}, $\alpha = O(1/\epsilon), \gamma = (1 + \epsilon)$, we arrive at Theorem~\ref{thm-cflpp}. Note that we cannot get rid of the violations in capacities even on using the stronger result of An \etal~\cite{Anfocs2014} for CFLP that uses strengthened LP. 

\section{Capacitated $k$-Facility Location Problem with penalties}
\label{CkFLPP-(2+e)}
LP for C$k$FLPP is same as that for CFLPP with the additional constraint: $\sum_{i \in \facilityset}y_i \leq k$. The Constraint ensures that at most $k$ facilities are opened in a feasible solution.
Using $O(1/\epsilon)$ factor approximation for C$k$FLP with $(1 + \epsilon)$ violation in capacities and $2/(1 + \epsilon)$ factor loss in cardinality of Grover \etal
~\cite{GroverGKP18}\footnote{A careful analysis shows that the cardinality loss in~\cite{GroverGKP18} is actually $2/(1 + \epsilon)$}, the approach of Section $\ref{CFLPP}$ leads to Theorem~\ref{thm-ckflpp1}. We next present a result without violating the cardinality. The capacity violation increases slightly to $(2 + \epsilon)$ in the process. Let $\rho^{*} = <x^*, y^*, z^*>$ denote the optimal $\lp{CkFLPP}$ solution and $\opt{CkFLPP}$ denote the cost of solution $\rho^*$. Let $\ell \geq 4$ be a fixed parameter.

\textbf{Step $1$: Identifying the set of clients that pay the penalty: }For a given $\epsilon \leq 1/4$, we identify the set of clients that pay the penalty in our solution in the same manner as was done in Section \ref{CFLPP}. Let $\hat{\rho}$ be solution so obtained. 
Clients in $\clirem$ are then served to an extent $\geq (1- \epsilon)$, i.e., $\sum_{i \in \facilityset} \hat{x}_{ij} \geq (1-\epsilon)$. Unlike CFLPP,  raising the assignments of clients in $\clirem$ to $1$ leads to $(1+2\epsilon)$ factor loss in cardinality. Hence we can not directly plug in the solution of C$k$FLP~\cite{GroverGKP18} as the approach relies on the fact that clients are served to full extent to guarantee sufficient opening within a cluster.
 
\textbf{Step $2$: Obtaining an integrally open solution:} We modify the constant factor approximation  for C$k$FLP by Grover \etal~\cite{GroverGKP18} that violates the capacities by a factor of $(2 + \epsilon)$ without violating the cardinality to obtain an integrally open solution. The extent to which clients are served is preserved from Step $1$. 





Step $2$ works in two phases: In phase I, \ the problem instance is sparsified using clustering techniques
in the same manner as explained in Section \ref{clustering-cflpo}. $\hat{C}_j$, the average connection cost of a client $j \in \clirem$ in  $\hat{\rho}$, is
$(\sum_{i \in \facilityset}{} \hat{x}_{ij} \dist{i}{j})/\sum_{i \in \facilityset}{} \hat{x}_{ij}$. 
In this case, choosing $\ell \leq 1/\epsilon$, total extent up to which facilities in $\ballofj{j}$ are opened under $\hat{\rho}$ is $\ge (1-\frac{1}{\ell})(\sum_{i \in \facilityset} \hat{x}_{ij}) \geq (1-\frac{1}{\ell})^2$.
Small and big clusters are defined in the same manner as in Section \ref{clustering-cflpo}. 
We move the demand $\Delta_{j'}$ to $j'$ before moving to Phase II incurring a cost bounded by $2(\ell+1)CostCkFLPP(\hat{\rho})$. Refer to Appendix~\ref{cost_j'_to_center} for the proof.
    \begin{enumerate}
    \item For a big cluster, sufficient facilities are opened in it so that its entire demand can be assigned to the opened facilities within the cluster; we call such clusters as {\em self-sufficient}. 
    \item
     For a small cluster, we are not able to guarantee this. That is, we may not be able to open even one facility in a small cluster. We try to send the unmet demand of such a cluster to one or more nearby clusters in which we guarantee that its demand is assigned to the opened facilities within the claimed (capacity) bounds. To achieve the goal, a forest of routing trees on cluster centers is defined.
     The edges costs are non-increasing as we move up the tree.
     \item 
     Each tree in the routing forest is decomposed into a number of subtrees. The set of clusters in each subtree is called a {\em bundle}. Size of each bundle is chosen appropriately to make sure that we are able to open sufficient number of facilities within each bundle so that all except at most $u$ units of its demand is served within the \MC itself. We write an auxiliary LP (ALP) to identify the facilities to be opened integrally in these bundles. ALP is defined such that the opened facilities are well spread out amongst the clusters of a \MC (it is ensured that all but at most one small clusters have a facility opened in it and demand of a big cluster is satisfied within the cluster itself.)

     \item An iterative rounding algorithm is then used to obtain a {\em pseudo-integral} solution that has at most two fractionally opened facilities, for the ALP.
     
     \item Pseudo integral solution is converted into an integrally open solution by opening the facility with larger opening to full extent. 
\end{enumerate}
     
    
We start with describing the routing tree which is used to form the bundles. 

\subsection{Constructing the Routing Trees}

We define a directed graph $G=(V,E)$ on set $\clientset'$ of cluster centers. For  $j' \in \csparse$, let $\pi(j')$ be the nearest other cluster center of $j'$ in $\cliset'$ 
and for $j' \in \cdense,~\pi(j') = j'$. The graph $G$ consists of edges $(j',\pi(j'))$. Note that each connected component of the graph is a tree except possibly a 2-cycle at the root, we delete any one of the two edges from the cycle arbitrarily. The resulting graph is a forest.

Following Lemma will be helpful in providing a feasible solution of bounded cost for the auxiliary LP. For a client $j' \in \csparse$, Lemma \ref{lemma-costbd2} bounds the cost of serving major part (at least $ 1-1/\ell$ extent) of $\demandofj{j'}$.
Let a client $j'$ is served to an extent of $\gamma$ (in $\hat{\rho}$) within its cluster $\bundle{j'}$ and  to an extent of $\delta$ outside. 
Lemma \ref{lemma-costbd2} bounds  the cost of serving $\gamma$ extent of \demandofj{j'} within $\bundle{j'}$
and sending $(1 - 1/\ell) - \gamma$ extent (if positive) of it to the nearest cluster center.
Note that $\gamma + \delta \geq (1-1/\ell)$.


\begin{lemma}	
	\label{lemma-costbd2}				
	$\sum_{j' \in { \csparse}}{} \bard{j'} (\sum_{i \in {\bundle{j'}}}\dist{i}{j'} \hat{x}_{ij'} + \dist{j'}{\pi(j')} ( \frac{\ell-1}{\ell} - min \{ \frac{\ell-1}{\ell},\sum_{i \in \bundle{j'}}\hat{x}_{ij'} \\ \} ))  \leq 6 \cdot  CostCkFLPP(\hat{\rho})$. \end{lemma}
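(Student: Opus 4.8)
The plan is to split the left-hand side into two pieces and bound each against Lemma~\ref{lemma-costbd1}. Write $\gamma_{j'} = \sum_{i \in \bundle{j'}} \hat{x}_{ij'}$ for the extent to which $j'$ is served by facilities \emph{inside} its own cluster and $\delta_{j'} = \sum_{i \in \facset \setminus \bundle{j'}} \hat{x}_{ij'}$ for the extent served \emph{outside}. From Step~$1$ every $j' \in \clirem$ satisfies $\gamma_{j'} + \delta_{j'} = \sum_{i \in \facset}\hat{x}_{ij'} \geq \frac{\ell-1}{\ell}$. The $\min$ in the statement makes the routing term vanish whenever $\gamma_{j'} \geq \frac{\ell-1}{\ell}$, so it suffices to control it on clusters with $\gamma_{j'} < \frac{\ell-1}{\ell}$, where the deficit obeys $\frac{\ell-1}{\ell} - \gamma_{j'} \leq \delta_{j'}$. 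The first (inside) piece $\sum_{j' \in \csparse} d_{j'}\sum_{i \in \bundle{j'}}\dist{i}{j'}\hat{x}_{ij'}$ is simply a restriction of the summation appearing on the left of Lemma~\ref{lemma-costbd1}.

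The decisive step will be a geometric bound on $\dist{j'}{\eta(j')}$ in terms of the service cost of $j'$ on facilities lying outside $\bundle{j'}$. If $i \in \facset \setminus \bundle{j'}$, then by the partition of $\facset$ into clusters $\{\bundle{k'}\}_{k' \in \cliset'}$ the center $k'$ nearest to $i$ is different from $j'$, hence it is a legitimate candidate for $\eta(j')$. Using the definition of $\eta$, the triangle inequality, and $\dist{i}{k'} \leq \dist{i}{j'}$ (since $k'$ is the center closest to $i$), I would derive $\dist{j'}{\eta(j')} \leq \dist{j'}{k'} \leq \dist{j'}{i} + \dist{i}{k'} \leq 2\dist{i}{j'}$. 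Multiplying by $\hat{x}_{ij'}$ (the bound is vacuous when $\hat{x}_{ij'}=0$) and summing gives
$$\dist{j'}{\eta(j')}\Big(\tfrac{\ell-1}{\ell} - \gamma_{j'}\Big) \leq \dist{j'}{\eta(j')}\,\delta_{j'} = \sum_{i \in \facset\setminus\bundle{j'}} \dist{j'}{\eta(j')}\,\hat{x}_{ij'} \leq 2\sum_{i \in \facset\setminus\bundle{j'}} \dist{i}{j'}\,\hat{x}_{ij'}.$$

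Finally I would add the two pieces. Since the coefficient $1$ on the inside facilities is dominated by the coefficient $2$ on the outside facilities, the entire left-hand side is at most $2\sum_{j' \in \csparse} d_{j'}\sum_{i \in \facset}\dist{i}{j'}\hat{x}_{ij'}$, which is at most $2\sum_{j' \in \cliset'} d_{j'}\sum_{i \in \facset}\dist{i}{j'}\hat{x}_{ij'} \leq 6\cdot CostCkFLPP(\hat{\rho})$ by Lemma~\ref{lemma-costbd1}. The only nontrivial ingredient is the inequality $\dist{j'}{\eta(j')} \leq 2\dist{i}{j'}$; I expect this to be the main obstacle, since it is the one place where the clustering construction (each outside facility belonging to a distinct cluster whose center competes to be $\eta(j')$) must be invoked, whereas the rest is bookkeeping and a single appeal to Lemma~\ref{lemma-costbd1}.
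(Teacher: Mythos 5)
Your proposal is correct and follows essentially the same route as the paper's own proof: bound the deficit $\frac{\ell-1}{\ell}-\min\{\cdot\}$ by the outside-cluster mass $\sum_{i\notin\bundle{j'}}\hat{x}_{ij'}$, then use the definition of $\eta(j')$, the triangle inequality, and the fact that each outside facility is assigned to its nearest center to get $\dist{j'}{\eta(j')}\leq 2\dist{i}{j'}$, and finish with Lemma~\ref{lemma-costbd1}. The only cosmetic difference is that the paper groups the outside facilities by the cluster $\bundle{k'}$ containing them rather than treating each facility individually.
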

\begin{proof}
Refer to Appendix~\ref{prf_lemma_costbd2}.
\end{proof}


\begin{figure}[t]

	\begin{tabular}{cccc}
		\includegraphics[width=30mm,scale=0.5]{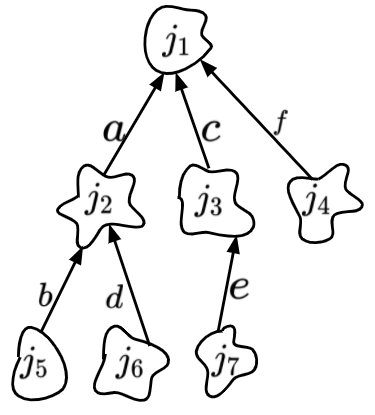}
		& 
		\includegraphics[width=60mm,scale=0.5]{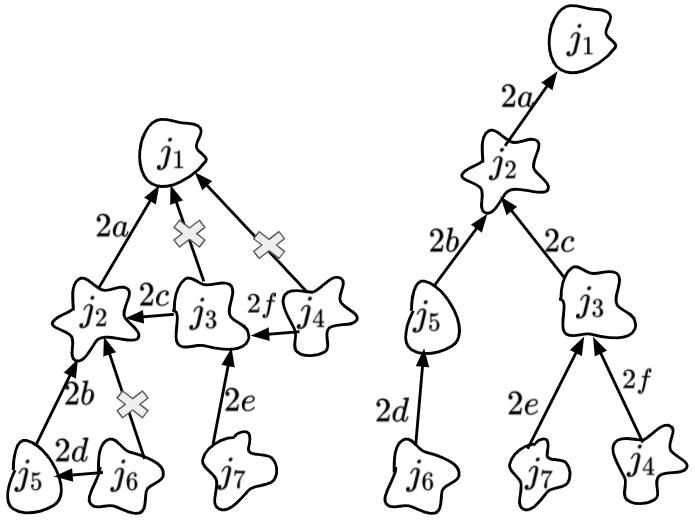}
		& 
		\includegraphics[width=30mm,scale=0.5]{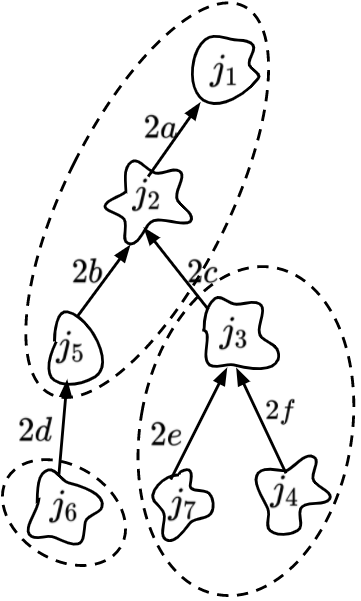} 	
		\\
		\textbf{(a)}& \ \
		\textbf{(b)}& \ \
		\ \ \textbf{(c)}
	\end{tabular}
	\caption{(a) A Tree $T$ of unbounded in-degree. $a<c<f$ , $b<d$.
		(b) A Binary Tree $T'$ where each node has in-degree at most $2$.
		(c) Formation of bundles for $\ell =6$. 
	}
	\label{bin-tree}
\end{figure}

For a $j' \in \csparse$, if we are unable to open a facility in $\bundle{j'}$ , we try to satisfy its demand from $\pi(j')$. However, as the in-degree of a node in the above routing trees is unbounded, this can lead to arbitrarily large capacity violations. Thus, we convert these trees into binary trees
using the standard procedure (refer to Figure \ref{bin-tree}). The binary routing trees  have the following properties:
$(i)$ there is at most one big cluster and if present, it must be the root of the tree, $(ii)$ the in-degree of root is at most $1$ and,
$(iii)$ the edge costs are non-increasing as we move up from the leaves to the root. 
Let $\sigmaone(j')$ be the parent of cluster $j'$ in the newly formed binary trees.
Then, $\dist{j'}{\sigmaone(j')} \leq 2 \dist{j'}{\pi(j')}$; hence we update the weights on all the edges to $2 \dist{j'}{\pi(j')}$ and we get the following lemma,
\begin{lemma}
$\sum_{j' \in {\csparse}}{} \bard{j'} \big( 
\sum_{i \in {\bundle{j'}}}{}
\dist{i}{j'} \hat{x}_{ij'} +
\dist{j'}{\sigmaone(j')} ( \frac{\ell-1}{\ell} - min \{ \frac{\ell-1}{\ell},\sum_{i \in \bundle{j'}}{} \\\hat{x}_{ij'} \} )
\big) \leq 12 \cdot CostCkFLPP(\hat{\rho})	$
\label{factor12}
\end{lemma}
\subsection{Integrally open solution: constructing the \MCs and writing the ALP}
\label{ALP1}

In this section, we write an auxiliary LP to identify the set of facilities to be opened integrally. We make bundles consisting of $q$ clusters where $q$ is chosen so that we can guarantee to open at least $q - 1$ facilities in each bundle.
Recall that each cluster has at least $(1 - 1/\ell)^2$ opening in it. As $q (1 - 1/\ell)^2 \ge q - 1$ for $q \le \ell/2$, 
we make \MCs consisting of $q = \ell/2$ clusters and open at least $q - 1$ facilities in it. Bundles are formed as follows:
for every binary routing tree $\mathcal{T}$, make bundles by processing nodes of the tree $\mathcal{T}$ greedily from the root node. Starting from the root node, extend the \MC by adding a node that is connected to the \MC by a cheapest edge. We grow the \MC until either we have included $\ell/2$ nodes(/clusters) or there are no more nodes to add (i.e., we have reached the leaves of the tree). Remove the nodes of the \MC and repeat the above process if there are more nodes in the tree (the tree could have gotten disconnected after removing the nodes included in the bundle). Each \MC is a binary rooted tree in itself.
The construction imposes a natural rooted tree structure on the \MCs also. Some \MCs towards the leaves may have less than $\ell/2$ clusters. Refer to Figure 2(c).

Let $\Bundler$ be a \MC (whose binary rooted tree is) rooted at cluster $r$. Let $\beta_r$ (0 or 1) be the number of big clusters  and $\sigma_r$ ($\leq \ell/2$) be the number of small clusters in $\Bundler$. Further, let $\mathcal{S}(\Bundler)$ be the sub graph of $\mathcal{T}$ induced by the nodes in $\Bundler$. We use cluster centers, clusters and nodes in the graph interchangeably.
We write an ALP so as to open $\alpha_r = \max\{0,\sigma_r-1\}$ facilities in the small clusters of $\Bundler$ (Constraint~\ref{LP-Meta-Clusters_const2-3}) and $\floor{\Delta_{j'}/u}$ facilities in a big cluster centered at $j'$, if any (Constraint~\ref{LP-meta-clusters_const1-dense-3}). In order to ensure that all but one small clusters have a facility opened in it
(Constraint~\ref{LP-meta-clusters_const1-sparse-3}) makes sure that not too many facilities are opened in one small cluster. 
ALP bounds the cost of sending the demand of such a cluster to its parent. However, it doesn't send the entire demand to its parent but does so for a major portion of it ($(1 - 1/\ell)$ extent of it).

\begin{figure}[t]
\begin{center}
   \includegraphics[width=60mm,scale=0.5]{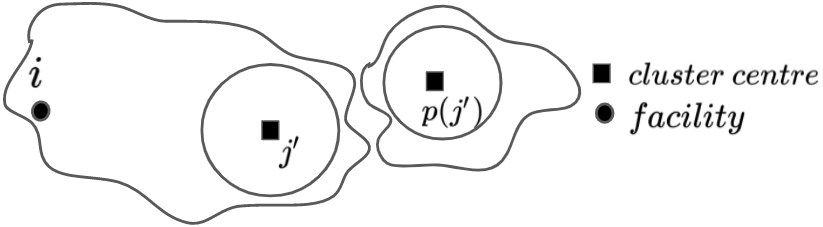} 
\end{center}
		
	\caption{There may be a facility $i$ in $\T{j'}$ that is farther from $j'$ than $\sigmaone(j')$. }
	\label{fj}
\end{figure}

For $j' \in \csparse$, we would like to open a facility in $\bundle{j'}$ only if it is no farther than $\sigmaone(j')$ from $j'$ (refer to Figure~\ref{fj}). Thus,
we define $\T{j'} = \{i \in \bundle{j'}:\dist{i}{j'} \leq \dist{j'}{\sigmaone(j')}\}$ if $j' \in \csparse$  and $\T{j'} = \neighbor{j'}$ if $j' \in \cdense$. 
Let $w_i$ denotes whether facility $i$ is opened in the solution or not.
Thus we arrive at the following ALP:

\noindent $\text{Min}~CostALP(w)= \sum_{j' \in \csparse }{} \demandofj{j'} [\sum_{i \in \bundle{j'}}\dist{i}{j'} w_{i} + \dist{j'}{\sigmaone(j') }  ( \frac{\ell-1}{\ell} - min \{ \frac{\ell-1}{\ell}, \sum_{i \in \bundle{j'}}{} w_{i} \})] \\ + \capacity \sum_{j' \in \crich } \sum_{\singlefacility \in \bundle{j'}} \dist{i}{j'} w_i + \sum_{i \in \facilityset} f_i w_i$ 
\begin{eqnarray}  
\text{s.t.} &\sum_{i \in \T{j'}}{} w_{i} \leq 1 &\forall ~j' \in \csparse  \label{LP-meta-clusters_const1-sparse-3}\\
&\sum_{i \in  \T{j'}} w_{i} \ge \floor{\Delta_{j'}/u}&\forall ~j' \in \cdense  \label{LP-meta-clusters_const1-dense-3}\\
&\sum_{j' \in \Bundler\cap \csparse}~\sum_{i \in \T{j'}} w_{i} \geq \alpha_r &\forall~r:\Bundler \text{ is a Bundle}  \label{LP-Meta-Clusters_const2-3}\\
&\sum_{i \in \facilityset}{} w_i \leq k
\label{LP-meta-clusters-const2-5}\\
&0 \leq w_i \leq 1 &\forall ~i \in \facilityset \label{LP-meta-clusters_const5-3}
\end{eqnarray}

It can be shown that optimal cost of the ALP is bounded.
For this, we will construct a feasible solution of bounded cost.
For all $i \in \facilityset$, let ~$\load{i} = \sum_{ j \in \clirem}{} \hat{x}_{i j}$. 
	For $j' \in \cdense,~i \in \T{j'}$, set $w'_i =  \frac{d_i}{u}$ 
	. For $j' \in \csparse$, set $w'_i = \hat{x}_{ij'}$ for $i \in \T{j'}$ and $w'_i = 0$ for $i \in \neighbor{j'} \setminus \T{j'}$. It can be shown that $w'$ is a feasible solution to the ALP and that its cost is bounded by $(2\ell+14) \cdot  CostCkFLPP(\hat{\rho})$.
	Refer to Appendix~\ref{feasiblesolution-ALP} for detailed proof.

A pseudo-integral solution $\tilde{w}$ 
of cost $CostALP(\tilde{w})$ bounded by $(2\ell+14) \cdot CostCkFLPP(\hat{\rho})$, is obtained by using an iterative rounding algorithm: compute an extreme point solution $w^{(o)}$ for the original ALP. LP for the next iteration is obtained by removing the integral variables and updating the constraints accordingly. 
%
%
The process is repeated until either all the variables are integral or all of them are fractional. 
If all the variables are integral we are done, otherwise
we use the properties of extreme point solution to claim that the number of non-zero fractional variables are at most two and that they must be in the same bundle. 
%
It is easy to show that the cost of the ALP is non-increasing over the iterations. 

To obtain an integrally open solution $\bar{w}$,
we open the facility with larger fractional opening and close the other incurring  a loss of $2$ factor  in the facility cost.
If the two facilities are in the same cluster,  we loose at most $2$ factor in the first/third term of the objective function.  Otherwise, let $j'$ be the cluster in which we closed down the facility opened by ALP; note that since $(\ell - 1)/\ell \ge 3/4$ for $\ell \ge 4$, the ALP bounds the cost of sending at least $1/4$ of the demand of $j'$ to $\sigmaone(j')$ in $\tilde{w}$.  
Thus, the cost of sending $(1 - 1/\ell)$ extent of demand of $j'$ to $\sigmaone(j')$ is bounded by
$4\cdot((\ell-1)/\ell)$.  Hence we loose a
factor of at most $4$ in the second term of the objective function in this case. Thus, $CostALP(\bar{w}) \leq 4 \cdot CostALP(\tilde{w}))$.


Having obtained an integrally opened solution, we first define the assignments within a bundle. A cluster with a facility opened in it is self-sufficient.
If no facility is opened in a small cluster centered at $j'$, its demand is assigned to $\sigmaone(j')$ if $j'$ is not the root cluster of the bundle at a loss of at most factor $2$ and $3$ in capacities for $\sigmaone(j')$ small and big respectively and at a
cost of $\Delta_{j'} \dist{j'}{\sigmaone(j')}$.

 

If $j'$ is the root cluster of its bundle, its demand is distributed amongst the facilities opened in the parent bundle.  Let $\delta_r =  \floordjbyu{\dense} + \max \{0, \sigma_r - 1\}$. The violation in the capacities is bounded by $\frac{Total\  demand \  to \  be \  served \  in \ a \  Bundle}{Total \ capacity \ in \ a \ Bundle} \le \frac{(\floor{\Delta_{j'}/u} + 1 + \sigma_r)u + (\frac{\ell}{2}+1)u }{\delta_ru} \leq \frac{(\delta_r+2)u + (\frac{\ell}{2}+1)u}{\delta_r u} \leq \frac{2(\delta_r+2)u}{\delta_r u} = 2+\frac{4}{\delta_r} \leq 2+ \frac{8}{(\ell-2)}$ where the last equality follows as $\delta_r \geq (\beta_r+ \sigma_r -1) = (\ell/2 -1)$ for a non-leaf bundle.
It is easy to see that the edge costs in the parent bundle are all less than $\dist{j'}{\sigmaone(j')}$.
By triangle inequality, the per unit cost of this assignment to a facility $i$ in a cluster centered at $k'$ in the parent bundle is bounded by $\dist{j'}{k'} + \dist{k'}{i}$. As there are at most $\ell/2$ edges in a bundle,   $\dist{j'}{k'} \le O(\ell)\dist{j'}{\sigmaone(j')}$. Also, $\dist{k'}{i} \le \dist{k'}{\sigmaone(k')}$ which is further $\le \dist{j'}{\sigmaone(j')} $. Refer to Figure~\ref{cost_bd_fig}.

Thus, if no facility is opened in a small cluster centered at $j'$, its demand can be assigned within a distance of $O(\ell)\dist{j'}{\sigmaone(j')} \le O(\ell)\dist{j'}{\sigmaone(j')} (\ell/(\ell -1)) (\frac{\ell-1}{\ell} - min \{ \frac{\ell-1}{\ell},\sum_{i \in \bundle{j'}}\bar{w}_{i}\})$.

\begin{figure}[]
\begin{center}
   \includegraphics[width=45mm,scale=0.5]{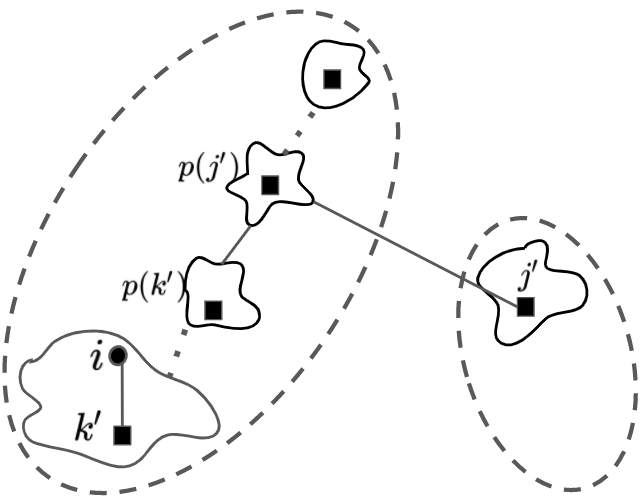} 
\end{center}
	\caption{$c(i,k') \leq c(k',p(k')) \leq c(j',p(j'))$}
	\label{cost_bd_fig}
\end{figure}

%
%
\subsection{($2+\epsilon$)-factor loss in capacities}
 Note that the only case in Section \ref{ALP1} where the capacities are violated by a factor of $3$ was when
the unmet demand of a small cluster is assigned to a big cluster and the big cluster already has a demand slightly less than $2 \capacity$. Consider a $\MC$ $\Bundler$ with a big cluster $j_b$ and $j_s$ as the only small child cluster.
In the above scenario, we consider $j_s$ with $j_b$ instead of considering it with remaining small clusters in $\Bundler$ and will open one more facility in either $\T{j_s}$ or $\T{j_b}$. To do so, we partition the $\MC$ $\Bundler$ into 
$\bundleone$ and $\bundletwo$ as follows: let $\alpha_{r}^b$ and $\alpha_{r}^s$ be the number of facilities we are going to open in $\bundleone$ and $\bundletwo$ respectively. Let  $\sigma_{r}^s$ be the number of small clusters in $\bundletwo$.

\begin{enumerate}
    \item If $\Bundler \cap \csparse = \phi$ or $(\frac{\Delta_{j_b}}{\capacity} -\floor{\frac{\Delta_{j_b}}{\capacity}}) \leq \delta$, then set $\bundleone=\Bundler \cap \cdense$, $\bundletwo=\Bundler \cap \csparse$, $\sigma_{r}^s=\sigma_r$ and $\alpha_{r}^b= \floor{\frac{\Delta_{j_b}}{u}}$.
    
    \item Else, $\frac{\Delta_{j_b}}{\capacity} -\floor{\frac{\Delta_{j_b}}{\capacity}} > \delta$, then set set $\bundleone=(\Bundler \cap \cdense)\cup \{j_s\}$, $\bundletwo=(\Bundler \cap \csparse) \setminus \{j_s\}$, $\sigma_{r}^s=max \{ \sigma_r-1, 0\}$ and $\alpha_{r}^b=\floor{\frac{\Delta_{j_b}}{u}} + 1$.
\end{enumerate}

Next we modify our ALP so as to open at least $\alpha_{r}^b$ and $\alpha_{r}^s = max \{ 0, \sigma_{r}^s-1 \}$ facilities in  $\bundleone$ and $\bundletwo$ respectively. 
We replace Constraints~\ref{LP-meta-clusters_const1-dense-3} and~\ref{LP-Meta-Clusters_const2-3} in the ALP of Section~\ref{ALP1} by the following constraints

\begin{eqnarray} 
&\sum_{j' \in \bundleone}~\sum_{i \in \T{j'}} w_{i} \geq \alpha_{r}^b &\forall~r:\bundleone \label{ALPnew-1}\\
&\sum_{j' \in \bundletwo}~\sum_{i \in \T{j'}} w_{i} \geq \alpha_r^s &\forall~r:\bundletwo  \label{ALPnew-2}
\end{eqnarray}

Now it is possible that two small clusters have no facility opened in them, one in $\bundleone$ and the other in $\bundletwo$. Let $j'$ be a child of $j_s$ and no facility is opened in both $j'$ and $j_s$. Note that in this case we must have opened $\alpha_{r}^b=\floor{\frac{\Delta_{j_b}}{u}} + 1$ facilities in $\T{j_b}$. Distribute the demands of $j'$ and $j_s$ to these opened facilities. The capacity violation is bounded by 
$\frac{\Delta_{j_b} + 2\capacity}{(\floor{\frac{\Delta_{j_b}}{u}} + 1)\capacity} \le 2$.

Since the total opening in a bundle remains same as that in Section~\ref{ALP1} demand from the children bundles are accommodated in the same capacity bounds.


\section{Additional Results using strengthened LPs}
\label{add_results}
%
%
For CFLPP, similar results can be obtained for non-uniform capacities also using the result of An \etal~\cite{Anfocs2014} as black box in step $2$. Note that, our results for C$k$FLPP also provide same results for the well known {\em capacitated $k$-median problem with penalties} (C$k$MP) as a special case. For C$k$MP, results of  Li~\cite{capkmshili2014} or Bryka \etal~\cite{ByrkaRybicki2015} can be used as black box in step $2$
of Section \ref{CFLPP} 
to reduce the violation in cardinality to $(1+\epsilon)$ for uniform capacities. Similar results can be obtained for non-uniform capacities  for C$k$MP using the result of Li~\cite{Lisoda2016} or Demirci \etal ~\cite{Demirci2016} in step $2$.

\section{Conclusion and Future Work}
\label{conclusion}
In this paper, we presented
a framework for designing approximation algorithms for capacitated facility location problems with outliers/penalty and applied it to obtain first constant factor approximations for some very fundamental problems like CFLPO and C$k$FLPP. 
Our solutions incur a slight violations in capacities, ($1 + \epsilon$) for the problems without cardinality constraint and ($2 + \epsilon$) for the problems with the cardinality constraint. For the outlier variant, we also incur a small loss ($1 + \epsilon$) in outliers. Due to the unbounded integrality gaps in the standard LP of the underlying problems, the violations are inevitable. Thus we achieve the best possible by rounding the solution of standard LP for these problems. The results of C$k$FLPP should be extendable to another closely related problem, Capacitated Knapsack Median with Penalties, by guessing the most expensive facility opened by the optimal solution and enumeration techniques.

It would be interesting to obtain similar results for the outlier variants of Capacitated $k$-Facility Location  and/or Capacitated $k$-Median. Three hard bounds viz. capacities, cardinality and the outliers, make the problems very challenging. Due to the integrality gap in LP of C$k$M, we must violate at least one of cardinality and capacities by a factor of at least $2$ and by the gap in LP of $k$MO, we must violate at least one of cardinality and outlier constraint. The challenge is to keep the violations low (some small constant). In particular, it would be interesting to see a result for C$k$MO violating only cardinality and/or a result violating any two constraints up to small constants or violating all three by $(1 + \epsilon)$. Also, it will be interesting and challenging to get constant approximations for the problems(CFLPO/C$k$MP/C$k$FLPP) without violating capacities/cardinality using strengthened LPs or local search.






\bibliographystyle{plainurl}
\bibliography{ref_master}

\begin{thebibliography}{10}

\bibitem{aardal2013}
Karen Aardal, Pieter~L. van~den Berg, Dion Gijswijt, and Shanfei Li.
\newblock Approximation algorithms for hard capacitated k-facility location
  problems.
\newblock {\em EJOR}, 242(2):358--368, 2015.
\newblock \href {https://doi.org/10.1016/j.ejor.2014.10.011}
  {\path{doi:10.1016/j.ejor.2014.10.011}}.

\bibitem{Aggarwal}
Ankit Aggarwal, Louis Anand, Manisha Bansal, Naveen Garg, Neelima Gupta,
  Shubham Gupta, and Surabhi Jain.
\newblock A 3-approximation for facility location with uniform capacities.
\newblock In {\em Proceedings of 14th International Conference on Integer
  Programming and Combinatorial Optimization (IPCO), Lausanne, Switzerland},
  pages 149--162. Springer-Verlag, 2010.
\newblock URL: \url{http://dx.doi.org/10.1007/978-3-642-13036-6_12}, \href
  {https://doi.org/10.1007/978-3-642-13036-6_12}
  {\path{doi:10.1007/978-3-642-13036-6_12}}.

\bibitem{mathp}
Ankit Aggarwal, Anand Louis, Manisha Bansal, Naveen Garg, Neelima Gupta,
  Shubham Gupta, and Surabhi Jain.
\newblock A 3-approximation algorithm for the facility location problem with
  uniform capacities.
\newblock {\em Journal of Mathematical Programming}, 141(1-2):527--547, 2013.

\bibitem{Anfocs2014}
Hyung-Chan An, Mohit Singh, and Ola Svensson.
\newblock Lp-based algorithms for capacitated facility location.
\newblock In {\em FOCS, 2014}, pages 256--265, Oct 2014.
\newblock \href {https://doi.org/10.1109/FOCS.2014.35}
  {\path{doi:10.1109/FOCS.2014.35}}.

\bibitem{Bansal}
Manisha Bansal, Naveen Garg, and Neelima Gupta.
\newblock A 5-approximation for capacitated facility location.
\newblock In Leah Epstein and Paolo Ferragina, editors, {\em Algorithms -- ESA
  2012}, pages 133--144, Berlin, Heidelberg, 2012. Springer Berlin Heidelberg.

\bibitem{capkmByrkaFRS2013}
J.~Byrka, Krzysztof Fleszar, Bartosz Rybicki, and J.~Spoerhase.
\newblock Bi-factor approximation algorithms for hard capacitated k-median
  problems.
\newblock In {\em SODA}, 2015.

\bibitem{Byrka07}
Jaroslaw Byrka.
\newblock An optimal bifactor approximation algorithm for the metric
  uncapacitated facility location problem.
\newblock In {\em Proceedings of 10th International Workshop on Approximation
  and 11th International Workshop on Randomization, and Combinatorial
  Optimization, (RANDOM-APPROX), NY, USA}, pages 29--43. Springer-Verlag, 2007.

\bibitem{CFLPP_Byrka}
Jaroslaw Byrka and Mateusz Lewandowski.
\newblock Concave connection cost facility location and the star inventory
  routing problem.
\newblock {\em CoRR}, abs/1912.00770, 2019.
\newblock URL: \url{http://arxiv.org/abs/1912.00770}, \href
  {http://arxiv.org/abs/1912.00770} {\path{arXiv:1912.00770}}.

\bibitem{ByrkaRybicki2015}
Jaros\l{}aw Byrka, Bartosz Rybicki, and Sumedha Uniyal.
\newblock An approximation algorithm for uniform capacitated k-median problem
  with $$1+\epsilon $$ capacity violation.
\newblock In {\em Proceedings of the 18th International Conference on Integer
  Programming and Combinatorial Optimization - Volume 9682}, IPCO 2016, page
  262–274, Berlin, Heidelberg, 2016. Springer-Verlag.
\newblock \href {https://doi.org/10.1007/978-3-319-33461-5_22}
  {\path{doi:10.1007/978-3-319-33461-5_22}}.

\bibitem{Charikar}
Moses Charikar and Sudipto Guha.
\newblock Improved combinatorial algorithms for the facility location and
  k-median problems.
\newblock In {\em Proceedings of the 40th Annual IEEE Symposium on Foundations
  of Computer Science (FOCS), New York, NY, USA}, pages 378--388, 1999.

\bibitem{Charikar:1999}
Moses Charikar, Sudipto Guha, \'{E}va Tardos, and David~B. Shmoys.
\newblock A constant-factor approximation algorithm for the <i>k</i>-median
  problem (extended abstract).
\newblock In {\em Proceedings of the Thirty-First Annual ACM Symposium on
  Theory of Computing}, STOC '99, page 1–10, New York, NY, USA, 1999.
  Association for Computing Machinery.
\newblock \href {https://doi.org/10.1145/301250.301257}
  {\path{doi:10.1145/301250.301257}}.

\bibitem{chen-kMO}
Ke~Chen.
\newblock A constant factor approximation algorithm for k-median clustering
  with outliers.
\newblock In {\em Proceedings of the Nineteenth Annual ACM-SIAM Symposium on
  Discrete Algorithms}, SODA '08, page 826–835, USA, 2008. Society for
  Industrial and Applied Mathematics.

\bibitem{Chudak98}
Fabi{\'a}n~A. Chudak.
\newblock Improved approximation algorithms for uncapacitated facility
  location.
\newblock In Robert~E. Bixby, E.~Andrew Boyd, and Roger~Z. R{\'i}os-Mercado,
  editors, {\em Integer Programming and Combinatorial Optimization}, pages
  180--194, Berlin, Heidelberg, 1998. Springer Berlin Heidelberg.

\bibitem{ChudakS03}
Fabi{\'a}n~A. Chudak and David~B. Shmoys.
\newblock Improved approximation algorithms for the uncapacitated facility
  location problem.
\newblock {\em SIAM Journal of Computing}, 33(1):1--25, 2003.

\bibitem{ChudakW99}
Fabi{\'a}n~A. Chudak and David~P. Williamson.
\newblock Improved approximation algorithms for capacitated facility location
  problems.
\newblock In {\em Proceedings of 7th International Conference on Integer
  Programming and Combinatorial Optimization (IPCO), Graz, Austria,}, pages
  99--113, 1999.

\bibitem{ChuzhoyR05}
Julia Chuzhoy and Yuval Rabani.
\newblock Approximating k-median with non-uniform capacities.
\newblock In {\em Proceedings of the Sixteenth Annual {ACM-SIAM} Symposium on
  Discrete Algorithms, {SODA} 2005, Vancouver, British Columbia, Canada,
  January 23-25, 2005}, pages 952--958. {SIAM}, 2005.
\newblock URL: \url{http://dl.acm.org/citation.cfm?id=1070432.1070569}.

\bibitem{Cygan-kCO}
Marek Cygan and Tomasz Kociumaka.
\newblock {Constant Factor Approximation for Capacitated k-Center with
  Outliers}.
\newblock In Ernst~W. Mayr and Natacha Portier, editors, {\em 31st
  International Symposium on Theoretical Aspects of Computer Science (STACS
  2014)}, volume~25 of {\em Leibniz International Proceedings in Informatics
  (LIPIcs)}, pages 251--262, Dagstuhl, Germany, 2014. Schloss
  Dagstuhl--Leibniz-Zentrum fuer Informatik.
\newblock URL: \url{http://drops.dagstuhl.de/opus/volltexte/2014/4462}, \href
  {https://doi.org/10.4230/LIPIcs.STACS.2014.251}
  {\path{doi:10.4230/LIPIcs.STACS.2014.251}}.

\bibitem{Demirci2016}
G{\"o}kalp Demirci and Shi Li.
\newblock {Constant Approximation for Capacitated k-Median with
  (1+epsilon)-Capacity Violation}.
\newblock In Ioannis Chatzigiannakis, Michael Mitzenmacher, Yuval Rabani, and
  Davide Sangiorgi, editors, {\em 43rd International Colloquium on Automata,
  Languages, and Programming (ICALP 2016)}, volume~55 of {\em Leibniz
  International Proceedings in Informatics (LIPIcs)}, pages 73:1--73:14,
  Dagstuhl, Germany, 2016. Schloss Dagstuhl--Leibniz-Zentrum fuer Informatik.
\newblock URL: \url{http://drops.dagstuhl.de/opus/volltexte/2016/6211}, \href
  {https://doi.org/10.4230/LIPIcs.ICALP.2016.73}
  {\path{doi:10.4230/LIPIcs.ICALP.2016.73}}.

\bibitem{FriggstadKR19}
Zachary Friggstad, Kamyar Khodamoradi, Mohsen Rezapour, and Mohammad~R.
  Salavatipour.
\newblock Approximation schemes for clustering with outliers.
\newblock {\em {ACM} Trans. Algorithms}, 15(2):26:1--26:26, 2019.
\newblock \href {https://doi.org/10.1145/3301446} {\path{doi:10.1145/3301446}}.

\bibitem{GroverGKP18}
Sapna Grover, Neelima Gupta, Samir Khuller, and Aditya Pancholi.
\newblock Constant factor approximation algorithm for uniform hard capacitated
  knapsack median problem.
\newblock In Sumit Ganguly and Paritosh~K. Pandya, editors, {\em 38th {IARCS}
  Annual Conference on Foundations of Software Technology and Theoretical
  Computer Science, {FSTTCS} 2018, December 11-13, 2018, Ahmedabad, India},
  volume 122 of {\em LIPIcs}, pages 23:1--23:22. Schloss Dagstuhl -
  Leibniz-Zentrum f{\"{u}}r Informatik, 2018.
\newblock \href {https://doi.org/10.4230/LIPIcs.FSTTCS.2018.23}
  {\path{doi:10.4230/LIPIcs.FSTTCS.2018.23}}.

\bibitem{xu2009improved}
Xu~Guang and Jinhui Xu.
\newblock An improved approximation algorithm for uncapacitated facility
  location problem with penalties.
\newblock {\em Journal of combinatorial optimization}, 17(4):424--436, 2009.

\bibitem{GuptaG14}
Neelima Gupta and Shubham Gupta.
\newblock Approximation algorithms for capacitated facility location problem
  with penalties.
\newblock {\em CoRR}, abs/1408.4944, 2014.
\newblock URL: \url{http://arxiv.org/abs/1408.4944}, \href
  {http://arxiv.org/abs/1408.4944} {\path{arXiv:1408.4944}}.

\bibitem{Hajiaghayi-kMP}
MohammadTaghi Hajiaghayi, Rohit Khandekar, and Guy Kortsarz.
\newblock Local search algorithms for the red-blue median problem.
\newblock {\em Algorithmica}, 63(4):795--814, 2012.
\newblock \href {https://doi.org/10.1007/s00453-011-9547-9}
  {\path{doi:10.1007/s00453-011-9547-9}}.

\bibitem{Jain:2001}
Kamal Jain and Vijay~V. Vazirani.
\newblock Approximation algorithms for metric facility location and k-median
  problems using the primal-dual schema and lagrangian relaxation.
\newblock {\em Journal of the ACM}, 48(2):274--296, March 2001.
\newblock URL: \url{http://doi.acm.org/10.1145/375827.375845}, \href
  {https://doi.org/10.1145/375827.375845} {\path{doi:10.1145/375827.375845}}.

\bibitem{jain2003greedy}
Jain K., Mahdian M., Markakis E., Saberi A., and Vazirani~V. V.
\newblock Greedy facility location algorithms analyzed using dual fitting with
  factor-revealing lp.
\newblock {\em Journal of ACM}, 50(6):795--824, 2003.

\bibitem{KPR}
Madhukar~R. Korupolu, C.~Greg Plaxton, and Rajmohan Rajaraman.
\newblock Analysis of a local search heuristic for facility location problems.
\newblock {\em Journal of Algorithms}, 37(1):146--188, 2000.
\newblock \href {https://doi.org/http://dx.doi.org/10.1006/jagm.2000.1100}
  {\path{doi:http://dx.doi.org/10.1006/jagm.2000.1100}}.

\bibitem{krishnaswamy-kMO}
Ravishankar Krishnaswamy, Shi Li, and Sai Sandeep.
\newblock Constant approximation for k-median and k-means with outliers via
  iterative rounding.
\newblock In {\em Proceedings of the 50th Annual ACM SIGACT Symposium on Theory
  of Computing}, STOC 2018, page 646–659, New York, NY, USA, 2018.
  Association for Computing Machinery.
\newblock \href {https://doi.org/10.1145/3188745.3188882}
  {\path{doi:10.1145/3188745.3188882}}.

\bibitem{capkmshanfeili2014}
Shanfei Li.
\newblock {An Improved Approximation Algorithm for the Hard Uniform Capacitated
  k-median Problem}.
\newblock In Klaus Jansen, Jos{\'e} D.~P. Rolim, Nikhil~R. Devanur, and
  Cristopher Moore, editors, {\em Approximation, Randomization, and
  Combinatorial Optimization. Algorithms and Techniques (APPROX/RANDOM 2014),
  Germany}, volume~28 of {\em Leibniz International Proceedings in Informatics
  (LIPIcs)}, pages 325--338, Dagstuhl, Germany, 2014. Schloss
  Dagstuhl--Leibniz-Zentrum fuer Informatik.
\newblock URL: \url{http://drops.dagstuhl.de/opus/volltexte/2014/4706}, \href
  {https://doi.org/http://dx.doi.org/10.4230/LIPIcs.APPROX-RANDOM.2014.325}
  {\path{doi:http://dx.doi.org/10.4230/LIPIcs.APPROX-RANDOM.2014.325}}.

\bibitem{Lisoda2016}
Shi Li.
\newblock {\em Approximating capacitated \emph{k}-median with {(1} +
  {\(\epsilon\)})\emph{k} open facilities}, pages 786--796.
\newblock URL:
  \url{https://epubs.siam.org/doi/abs/10.1137/1.9781611974331.ch56}, \href
  {http://arxiv.org/abs/https://epubs.siam.org/doi/pdf/10.1137/1.9781611974331.ch56}
  {\path{arXiv:https://epubs.siam.org/doi/pdf/10.1137/1.9781611974331.ch56}},
  \href {https://doi.org/10.1137/1.9781611974331.ch56}
  {\path{doi:10.1137/1.9781611974331.ch56}}.

\bibitem{Li13}
Shi Li.
\newblock A 1.488 approximation algorithm for the uncapacitated facility
  location problem.
\newblock {\em Journal of Information and Computation}, 222:45--58, 2013.

\bibitem{LiDXX15}
Yu~Li, Donglei Du, Naihua Xiu, and Dachuan Xu.
\newblock Improved approximation algorithms for the facility location problems
  with linear/submodular penalties.
\newblock {\em Algorithmica}, 73(2):460--482, 2015.
\newblock \href {https://doi.org/10.1007/s00453-014-9911-7}
  {\path{doi:10.1007/s00453-014-9911-7}}.

\bibitem{charikar2001algorithms}
Charikar M., Khuller S., and Mount D. M.and~Narasimhan G.
\newblock Algorithms for facility location problems with outliers.
\newblock pages 642--651, 2001.

\bibitem{mahdian_universal}
Mohammad Mahdian and Martin P{\'a}l.
\newblock Universal facility location.
\newblock In {\em Proceedings of the 11th Annual European Symposium on
  Algorithms (ESA), Budapest, Hungary}, volume 2832 of {\em Lecture Notes in
  Computer Science}, pages 409--421, 2003.

\bibitem{paltree}
M.~P\'{a}l, \'{E}. Tardos, and T.~Wexler.
\newblock Facility location with nonuniform hard capacities.
\newblock In {\em Proceedings of the 42nd IEEE Symposium on Foundations of
  Computer Science (FOCS), Las Vegas, Nevada, USA}, pages 329--338, 2001.

\bibitem{Shmoys}
David~B. Shmoys, {\'E}va Tardos, and Karen Aardal.
\newblock Approximation algorithms for facility location problems (extended
  abstract).
\newblock In {\em Proceedings of the Twenty-Ninth Annual ACM Symposium on the
  Theory of Computing, El Paso, Texas, USA,}, pages 265--274, 1997.

\bibitem{Wang2015penalties}
Fengmin Wang, Dachuan Xu, and Chenchen Wu.
\newblock Approximation algorithms for the robust facility location problem
  with penalties.
\newblock {\em Journal of Systems Science and Complexity}, 28, 10 2015.
\newblock \href {https://doi.org/10.1007/s11424-014-2157-2}
  {\path{doi:10.1007/s11424-014-2157-2}}.

\bibitem{wang-kFLPP}
Yishui Wang, Dachuan Xu, Donglei Du, and Chenchen Wu.
\newblock An approximation algorithm for k-facility location problem with
  linear penalties using local search scheme.
\newblock {\em Journal of Combinatorial Optimization}, 36(1):264--279, 2018.

\bibitem{Wu-kMP}
Chenchen Wu, Donglei Du, and Dachuan Xu.
\newblock An approximation algorithm for the k -median problem with uniform
  penalties via pseudo-solution.
\newblock {\em Theoretical Computer Science}, 749, 03 2018.
\newblock \href {https://doi.org/10.1016/j.tcs.2018.02.026}
  {\path{doi:10.1016/j.tcs.2018.02.026}}.

\bibitem{xu2005lp}
Guang Xu and Jinhui Xu.
\newblock An lp rounding algorithm for approximating uncapacitated facility
  location problem with penalties.
\newblock {\em Information Processing Letters}, 94(3):119--123, 2005.

\bibitem{zhangchenye}
Jiawei Zhang, Bo~Chen, and Yinyu Ye.
\newblock A multiexchange local search algorithm for the capacitated facility
  location problem.
\newblock {\em Journal of Mathematics of Operations Research}, 30(2):389--403,
  2005.
\newblock \href {https://doi.org/http://dx.doi.org/10.1287/moor.1040.0125}
  {\path{doi:http://dx.doi.org/10.1287/moor.1040.0125}}.

\end{thebibliography}
\appendix
\section{Appendix}
\subsection{Cost bound for moving demand $\Delta_{j'}$ to $j'$}
\label{cost_j'_to_center}
	Let $j' \in \clientset'$, 
	$ j \in \clirem$ and $i \in \bundle{j'}$, using the fact that $\dist{j'}{ j} \le 2\dist{i}{ j} + 2 \ell\C{ j}$ we get,
	
	$\sumlimits{j' \in \cliset'}{}\dist{j'}{ j} \sum_{i \in \bundle{j'}} \hat{x}_{ij} \Delta_j \leq 2\sumlimits{j' \in \cliset'}{} \sum_{i \in \bundle{j'}}{}(\dist{i}{j}+\ell\C{j})\hat{x}_{ij} \Delta_j$
	
	\hspace{1.6in}$= 2\sumlimits{j' \in \cliset'}{} \sum_{i \in \bundle{j'}}{}\dist{i}{j}\hat{x}_{ij} \Delta_j+ 2\ell\C{j}\sum_{j' \in \cliset'}{} \sumlimits{i \in \bundle{j'}}{}\hat{x}_{ij} \Delta_j$
	
	\hspace{1.6in}$= 2 \sumlimits{i \in \facilityset}{} \dist{i}{j}\hat{x}_{ij} \Delta_j+ 2\ell\C{j}\sumlimits{i \in \facilityset}{} \hat{x}_{ij} \Delta_j$

	Substituting the value of $\C{j}$, we get
	
	$\sumlimits{j' \in \cliset'}{}\dist{j'}{ j} \sum_{i \in \bundle{j'}} \hat{x}_{ij} \Delta_j \leq 2\sumlimits{i \in \facilityset}{}\dist{i}{j}\hat{x}_{ij} \Delta_j + 2\ell(\frac{\sum_{i \in \facilityset}{} \hat{x}_{ij} \dist{i}{j}}{\sum_{i \in \facilityset}{} \hat{x}_{ij}}) \sum_{i \in \facilityset}{} \hat{x}_{ij} \Delta_j$

	\hspace{1.2in}  $= 2(\ell+1)\sumlimits{i \in \facilityset}{}\dist{i}{j}\hat{x}_{ij} \Delta_j$. 
	
	Summing over all $j \in \clirem$, we get the desired claim.

\subsection{Proof of Lemma~\ref{lemma-costbd2}}
\label{prf_lemma_costbd2}
    The second term of LHS is
	
	$\leq \sum_{j' \in { \csparse}}{} \bard{j'} \dist{j'}{\pi(j')} ( \sum_{i \in \facilityset} \hat{x}_{ij'} -\sum_{i \in \bundle{j'}}\hat{x}_{ij'} ) \  $ \footnote{$\frac{\ell-1}{\ell} - min \{ \frac{\ell-1}{\ell},\sum_{i \in \bundle{j'}}\hat{x}_{ij'} \} = 0$ if $min \{ \frac{\ell-1}{\ell},\sum_{i \in \bundle{j'}}\hat{x}_{ij'} \} = (\frac{\ell-1}{\ell})$ and it is $\leq ( \sum_{i \in \facilityset} \hat{x}_{ij'} -\sum_{i \in \bundle{j'}}\hat{x}_{ij'} )$ otherwise, 
	where the inequality follows as $\sum_{i \in \facilityset} \hat{x}_{ij'} \ge (1- 1/\ell)$.}
	
	$= \sum_{j' \in { \csparse}}{} \bard{j'} \big( \sum_{i \notin {\bundle{j'}}}{}
	\dist{j'}{\pi(j')} \hat{x}_{ij'}\big) $
	
	$\leq\sum_{j' \in { \csparse}}{} \bard{j'}\big( \sum_{ k' \in { \cliset'}: k'\neq j'~}{} \sum_{i \in \bundle{ k'}}{}\dist{j'}{k'}{} \hat{x}_{ij'}\big)$
	
	 $=\sum_{j' \in { \csparse}}{} \bard{j'}\big( \sum_{ k' \in { \cliset'}: k'\neq j'~}{} \sum_{i \in \bundle{ k'}}{}(\dist{i}{j'} + \dist{i}{k'})\hat{x}_{ij'}\big)$ 
	(by triangle inequality)
	
	$ \leq \sum_{j' \in { \csparse}}{} \bard{j'} \big(\sum_{k'\in { \cliset'}: k'\neq j'~}{} \sum_{i \in \bundle{ k'}}{} 2 \dist{i}{j'} \hat{x}_{ij'} \big)$ 	
	(as $i \in \bundle{k'}$ and not $\bundle{j'}$.)
	
	Adding the first term of the claim and applying the following Lemma, we get the desired claim.

	\begin{lemma}~\cite{GroverGKP18}
	\label{lemma-costbd1}
	$\sum_{j' \in {\cliset'}}{} \Delta_{j'}\sum_{i \in {\facset}}{}\dist{i}{j'}\hat{x}_{ij'}\leq 3\sum_{j\in {\clirem}}\sum_{i \in {\facset}}{}\dist{i}{j}\hat{x}_{ij} = \\ 3CostCkFLPP(\hat{\rho})$
\end{lemma}

\begin{proof}
	By definition of $\Delta_{j'}$ and $\C{j'}$ we have, 
	
	$\sum{j' \in {\cliset'}}{} \Delta_{j'} 
	\sum_{i \in {\facset}}{}
	\dist{i}{j'}\hat{x}_{ij'} = \sum{j' \in {\cliset'}}{}
	\big( \sum{ j \in \clirem}{}
	\sum_{i  \in \bundle{j'}}\hat{x}_{ij} \Delta_j\big) \C{j'}$ 
	
	$= \sum{j' \in {\cliset'}}{}
	\big( \sumlimits{ j \in \clirem : \dist{j'}{j} \leq \ell\C{j'}}{}
	\sum_{i  \in \bundle{j'}}\hat{x}_{ij} \Delta_j \C{j'}+ \sumlimits{ j  \in \clirem : \dist{j'}{j} > \ell\C{j'}}{}
	\sum_{i  \in \bundle{j'}}\hat{x}_{ij} \Delta_j \C{j'}\big)$
	
	Consider the first term in the sum on RHS. Using  the fact that for $ j \in \clientset \setminus \clientset'$, if $\dist{j'}{ j} \leq \ell\C{j'}$, then $\ell\C{j'} \le 2\ell\C{j}$., we get
	
	$\sumlimits{j' \in {\cliset'}}{}
	\sum_{ j \in \clirem : \dist{j'}{j} \leq \ell\C{j'}}{}
	\sum_{i  \in \bundle{j'}}\hat{x}_{ij} \Delta_j \C{j'}
	\leq 2\sumlimits{j' \in {\cliset'}}{}
	\sum_{j  \in \clirem : \dist{j'}{j}\leq \ell\C{j'}}{}
	\sum_{i  \in \bundle{j'}}\hat{x}_{ij}\Delta_j \C{j}$
	
Next, considering the second term in sum on RHS.
	
	$\sumap{j' \in {\cliset'}}\sum_{j \in \clirem:\dist{j'}{j} > \ell\C{j'}} \sum_{i  \in \bundle{j'}}\hat{x}_{ij} \Delta_j \C{j'}$ 
	
	 $< \frac{1}{\ell} \sumlimits{j' \in {\cliset'}}{}\sumlimits{j  \in \clirem:\dist{j'}{j} > \ell\C{j'}}{}\sumlimits{i \in \bundle{j'}}{} \hat{x}_{ij} \Delta_j \dist{j'}{j}$
	
	 $\leq\frac{1}{\ell} \sumlimits{j' \in {\cliset'}}{} \sum_{j \in \clirem: \dist{j'}{j} > \ell\C{j'}}{} \sumlimits{i \in \bundle{j'}}{}
	\hat{x}_{ij} \Delta_j (2\dist{i}{j}+ 2\ell \C{j})$ as $\dist{j'}{ j} \le 2\dist{i}{ j} + 2 \ell\C{ j}$.
	
	$\leq \sumlimits{j' \in {\cliset'}}{}
	\sumlimits{j \in \clirem: \dist{j'}{j} > \ell\C{j'}}{} \sumlimits{i \in \bundle{j'}}{}\hat{x}_{ij} \Delta_j (\dist{i}{ j} + 2\C{j})$ as $\frac{2}{\ell}\leq 1$. 
	
	Adding both the parts, we get the desired claim.
\end{proof}	
	
\subsection{Proof of cost bound for optimal solution of ALP}
\label{feasiblesolution-ALP}
    
    
    {\em Feasibility:} \begin{enumerate}
	\item For $j' \in \csparse$,
	$\sum_{i\in \T{j'}}{}w'_i \leq \sum_{i\in \bundle{j'}}{}\hat{x}_{ij'} \leq 1$.
	\item For $j' \in \cdense$, $\sum_{i \in \T{j'}}{} w'_{i} = \sum_{i \in \neighbor{j'}}{} \frac{d_i}{u}=\frac{\Delta_{j'}}{u}
	\geq {\floor{\frac{\Delta_{j'}}{u}}}$ as $\sum_{i \in \bundle{j'}}{} d_i = \Delta_{j'}$.
	
	\item For a \MC $\Bundler$, we have  
	$\sum_{j' \in \Sr{r}}{}~\sum_{i \in \T{j'}}{} w'_{i}
    = \sum_{j' \in \cen{r}\cap \csparse}{}~\sum_{i \in \T{j'}}{} \hat{x}_{ij'}
    \geq \sum_{j' \in \cen{r} \cap \csparse}{} (1 - \frac{1}{\ell})^2 = \sigma_r(1-\frac{1}{\ell})^2 \geq \max\{0,\sigma_r-1\} = \alpha_r$ where the last inequality follows for $\sigma_r \leq \ell/2$ and $\ell \ge 4$.
	
	\item 
	Also, $\sum_{i \in \facilityset}{} w'_{i}  \leq \sum_{i \in \facilityset}{} \hat{y}_{i} \leq k$. 
	
	\end{enumerate}
	
	{\em Cost Bound:} Next, consider the objective function.
	
	\begin{enumerate}
	    \item Consider facility opening cost. Clearly,	$\sum_{i \in \facilityset}{} f_i w'_{i}  \leq  \sum_{i \in \facilityset}{} f_i \hat{y}_{i} \leq LP_{opt} $.
 
 \item Next, consider the part of objective function for $\cdense$. 
 
 For $j' \in \crich$, we have,
 $\sum_{i \in \T{j'}} \capacity~\dist{i}{j'} w'_i \\= u\sum_{i \in \bundle{j'}} \dist{i}{j'}(\frac{\sum_{j \in C_r} \hat{x}_{ij}}{\capacity}) \\= \sum_{i \in \bundle{j'}} \sum_{j \in C_r} \dist{i}{j'} \hat{x}_{ij}\\ \leq \sum_{i \in \bundle{j'}} \sum_{j \in C_r} \dist{i}{k'} \hat{x}_{ij}$ 
 (as $i$ belongs to $\bundle{j'}$ and not $\bundle{k'}$ for some $k' \ne j'$) 
 $\\ \leq \sum_{i \in \bundle{j'}} \sum_{j \in C_r} \left( \dist{i}{j} + \dist{j}{k'} \right) \hat{x}_{ij}$ (using triangle inequality) 
 $\\ \leq \sum_{i \in \bundle{j'}} \sum_{j \in C_r} \left( \dist{i}{j} + 2\ell \C{j} \right) \hat{x}_{ij}$ because( $\dist{j}{k'} \leq 2\ell\C{j}$)
 
Summing over all $j' \in \cdense$ we get, 
	$\sum_{j' \in \crich}{} \sum_{i \in \bundle{j'}}{} \sum_{j \in C_r}{} \hat{x}_{ij} \lbrack \dist{i}{j} +2\ell\C{j} \rbrack \leq (2\ell+1) \cdot  CostCkFLPP(\hat{\rho})$.
	
	\item Now consider the part of objective function for $\csparse$. For $j' \in \crich$, we have, 
	
	$\sum_{j' \in \csparse }{} ~ \demandofj{j'} [\sumap{i \in \bundle{j'}}\dist{i}{j'} w'_{i} + \dist{j'}{\sigmaone(j') }  ( \frac{\ell-1}{\ell} - min \{ \frac{\ell-1}{\ell}, \sum_{i \in \bundle{j'}}{} w'_{i})] \} $
	
	$= \sum_{j' \in \csparse }{} ~ \demandofj{j'} [\sum_{i \in \bundle{j'}}\dist{i}{j'} \hat{x}_{ij'} + \dist{j'}{\sigmaone(j') }  ( \frac{\ell-1}{\ell} - min \{ \frac{\ell-1}{\ell}, \sum_{i \in \bundle{j'}}{} \hat{x}_{ij'})] \}$
	
	$\leq 12 \cdot CostCkFLPP(\hat{\rho})$ by Lemma \ref{factor12}.
	\end{enumerate}

	Thus, the solution $w'$ is feasible and $CostCkFLPP(w')$,
	
	$\sum_{j' \in \csparse}{} ~ \demandofj{j'} ~\left[ \sum_{i \in \bundle{j'}}\dist{i}{j'} w'_{i} + \dist{j'}{\sigmaone(j')}  \left( \frac{\ell-1}{\ell} - min\{ \frac{\ell-1}{\ell}, \sum_{i \in \bundle{j'}} \hat{x}_{ij'} \} \right) \right] + \\ \capacity \sum_{j' \in \crich} \sum_{\singlefacility \in \bundle{j'}} \dist{i}{j'} w'_i + \sum_{i \in \facilityset} f_iw'_i
	\leq (2\ell+14) \cdot  CostCkFLPP(\hat{\rho})$.

\subsection{$(2+\epsilon)$ factor violation in capacities}
\label{app_(2+e)}
\begin{lemma}
\label{feasiblesolution-costKNM2}
$w'$ is a feasible solution to ALP with new constraints and $ CostALP(w') \leq (2\ell+14) \cdot  CostCkFLPP(\hat{\rho})$.
\end{lemma}
  
 \begin{proof}
 The feasible solution $w'$, feasibility of the constraints other than Constraints~\ref{ALPnew-1} and \ref{ALPnew-2} and the cost bound is same as that in Appendix~\ref{feasiblesolution-ALP}.
 
 \begin{enumerate}
     \item Consider Constraints~\ref{ALPnew-1},  $\sum_{j' \in \bundleone}~\sum_{i \in \T{j'}} w'_{i} = \sum_{i \in \T{j_b}} w'_{i} = \sum_{i \in \T{j_b}} \frac{d_i}{u} = \frac{\Delta_{j_b}}{u} \geq \floor{\frac{\Delta_{j_b}}{u}}$ when $(\frac{\Delta_{j_b}}{\capacity} -\floor{\frac{\Delta_{j_b}}{\capacity}}) \leq \delta = \alpha_{r}^b$. Otherwise, $\sum_{j' \in \bundleone}~\sum_{i \in \T{j'}} w'_{i} = \sum_{i \in \T{j_b}} w'_{i} + \sum_{i \in \T{j_s}} w'_{i} = \sum_{i \in \T{j_b}} \frac{d_i}{u} + \hat{x}_{ij_s} \geq \frac{\Delta_{j_b}}{u} + (1-\frac{1}{\ell})^2 \geq \floor{\frac{\Delta_{j_b}}{u}} + \delta + (1-\frac{1}{\ell})^2 \geq  \floor{\frac{\Delta_{j_b}}{u}} + 1$ for $\ell \geq \frac{2}{\delta} = \alpha_{r}^b$.
     
     \item Consider Constraints~\ref{ALPnew-2}, if $(\frac{\Delta_{j_b}}{\capacity} -\floor{\frac{\Delta_{j_b}}{\capacity}}) \leq \delta$, then $\bundletwo=\Bundler \cap \csparse$. 
     
     Therefore, $\sum_{j' \in \bundletwo}~\sum_{i \in \T{j'}} w'_{i} \geq \sum_{j' \in \cen{r} \cap \csparse}{} (1 - \frac{1}{\ell})^2 = \sigma_{r}^s(1-\frac{1}{\ell})^2 \geq \max\{0,\sigma_{r}^s-1\} = \alpha_r^s$ where the last inequality follows for $\sigma_r^s \leq \ell/2$ and $\ell \ge 4$. Otherwise, $\sum_{j' \in \bundletwo}~\sum_{i \in \T{j'}} w'_{i} = \sum_{j' \in \bundletwo}~\sum_{i \in \T{j'}} \hat{x}_{ij'} \geq \sum_{j' \in \bundletwo} (1-\frac{1}{\ell})^2 = \sigma_r^s(1-\frac{1}{\ell})^2 \geq \max\{0,\sigma_{r}^s-1\} = \alpha_r^s$ where the last inequality follows because $\sigma_r^s \leq \ell/2$.
 \end{enumerate}\end{proof}
 
 An integral solution $\bar{w}$ is obtained in a similar manner as in Section~\ref{ALP1}.

\end{document}